\newif\ifConference
\newif\ifJournal
\newif\ifAnonymous
\newif\ifAppendix
\newif\ifFinal

\Conferencetrue
\Appendixtrue
\Finaltrue

\documentclass[a4paper,USenglish,cleveref,autoref,thm-restate,numberwithinsect,nameinlink]{lipics-v2021}

\usepackage[utf8]{inputenc}
\usepackage{amsmath}
\usepackage{amssymb}
\usepackage{amsthm}
\usepackage{enumerate}
\usepackage{dsfont}
\usepackage{thmtools}
\usepackage{mathtools}
\usepackage{microtype}
\usepackage{graphicx}
\usepackage{multirow}
\usepackage[table]{xcolor}
\usepackage{tikz}
\usetikzlibrary {arrows.meta} 
\usetikzlibrary{graphs}
\usepackage{placeins}
\usepackage{nicefrac}
\usepackage{tabularx}
\usepackage{algorithm2e}
\usepackage[T1]{fontenc}
\usepackage{etoolbox}
\usepackage{hyperref}
\usepackage{booktabs}
\hypersetup{
	colorlinks,
	linkcolor={red!50!black},
	citecolor={blue!90!black!80},
	urlcolor={blue!80!black}
}
\usepackage{cleveref}

\usepackage{lineno}
\ifFinal
\usepackage[disable]{todonotes}
\linenumbers
\else
\usepackage[notref, notcite]{showkeys}
\usepackage{todonotes}
\fi

\usepackage{etoolbox}
\newcommand{\appendixproofs}{}
\newcommand{\toappendix}[1]{\gappto{\appendixproofs}{#1}}
\toappendix{
	\renewcommand\thesection{A}
	\section{Appendix}
	\renewcommand\thesection{A.\arabic{section}}
	\setcounter{section}{0}
}

{\upshape\itshape}{\upshape\rmfamily}

{\upshape\itshape}{\upshape\rmfamily}
{\upshape\itshape}{\upshape\rmfamily}
\usepackage{multirow}

\newcommand{\kommentar}[1]{}
\newcommand{\Oh}{\ensuremath{\mathcal{O}}}

\DeclareMathOperator{\poly}{poly}
\DeclareMathOperator{\tw}{{tw}}
\DeclareMathOperator{\sw}{{sw}}
\DeclareMathOperator{\nsw}{{nsw}}

\newcommand{\yes}{{\normalfont\texttt{yes}}\xspace}

\newcommand{\Wh}[1]{{\normalfont W[\ensuremath{#1}]}\xspace}

\newcommand{\FPT}{{\normalfont{FPT}}\xspace}

\newcommand{\XP}{{\normalfont{XP}}\xspace}

\newcommand{\Instance}{{\ensuremath{\mathcal{I}}}\xspace}

\newcommand{\Food}{{\ensuremath{\mathcal{F}}}\xspace}

\newcommand{\problemdef}[3]{
	\begin{trivlist}
		\item \tikz{\node [draw,inner sep=4.5pt] {\begin{minipage}{\textwidth-9.4pt}
					\normalsize\textsc{#1}
					
					\smallskip
					
          \setlength{\tabcolsep}{0pt}%
					\begin{tabularx}{\textwidth-9.4pt}{l@{\ }>{\raggedright}X}
						\normalsize\textbf{Input:}	  & \normalsize#2 \cr
            \normalsize\textbf{Question:} & \normalsize#3
					\end{tabularx}
			\end{minipage}};}
	\end{trivlist}
}

\newcommand{\PROB}[1]{{{\upshape\textsc{#1}}}\xspace}

\newcommand{\CDSLong}{\PROB{Capacitated Dominating Set}}
\newcommand{\CDS}{\PROB{CDS}}

\ifAnonymous

	\author{Anonymous Author(s)}
	{Anonymous Afiliation(s)}
	{}
	{}
	{}
	
	\authorrunning{Anonymous Author(s)}

\else
	\author{Jannik Schestag}
	{Delft University of Technology, Delft, The Netherlands}
	{j.t.schestag@tudelft.nl}
	{https://orcid.org/0000-0001-7767-2970}
	{Funded by the Dutch Research Council (NWO),\\grant OCENW.GROOT.2019.015,
		``Optimization for and with Machine Learning (OPTIMAL).''}
	\author{Norbert Zeh}
	{Dalhousie University, Halifax, Canada}
	{nzeh@cs.dal.ca}
	{https://orcid.org/0000-0002-0562-1629}
	{Research supported by NSERC Discovery Grant number RGPIN-2025-06235.}
	
	\Copyright{Jannik Schestag and Norbert Zeh}
	
	\authorrunning{J. Schestag and N. Zeh}

\fi

\keywords{treewidth; scanwidth; phylogenetic diversity; fixed-parameter tractability}

\ccsdesc{Applied computing~Computational biology}
\ccsdesc{Theory of computation~Fixed parameter tractability}

\hideLIPIcs

\usepackage{cite}
\usepackage{hyperref}
\usepackage{doi}
\usetikzlibrary{shadows}

\newcommand{\fsPDD}{\PROB{\mbox{$1$-PDD$_{\text{s}}$}}}

\newcommand{\hsPDD}{\PROB{\mbox{$\nicefrac{1}{2}$-PDD$_{\text{s}}$}}}
\newcommand{\asPDD}{\PROB{\mbox{$\alpha$-PDD$_{\text{s}}$}}}

\newcommand{\wPDDlong}{\PROB{Weighted Phylogenetic Diversity with Dependencies}}
\newcommand{\wPDD}{\PROB{Weighted PDD$_{\text{s}}$}}

\newcommand{\gviable}{\mbox{$\gamma$-viable}\xspace}
\newcommand{\hviable}{\mbox{$\nicefrac12$-viable}\xspace}

\newcommand\B{\mathcal{B}}

\renewcommand\L{\mathcal{L}}
\renewcommand\P{\mathcal{P}}
\newcommand\T{\mathcal{T}}

\newcommand\NN{\mathbb{N}}
\newcommand\RR{\mathbb{R}}

\usepackage{tikz}
\usetikzlibrary{arrows.meta,backgrounds,calc,decorations.pathreplacing,fit,intersections,quotes}
\tikzset{
	>=Latex,
	edge/.style = {draw=black,thick,line cap=butt,line join=round},
	placeholder/.style = {edge,densely dashed},
	chosen/.style = {circle,draw=black,fill=black,inner sep=0pt,minimum size=3pt,outer sep=1pt},
	dot/.style = {circle,draw=black,fill=black,inner sep=0pt,minimum size=1pt},
	not-chosen/.style = {chosen,fill=white},
	fan/.style = {draw=black,thin},
	widget/.style = {thin,draw=black,fill=black!20,rounded corners=3pt}
}

\tikzset{
	fan/.pic = {
		\path [fan]
		(-30:0.2) -- (0,0) -- (30:0.2)
		(-15:0.2) -- (0,0) -- (15:0.2)
		(0,0) -- (0:0.2) coordinate (-label);
}}

\usepackage{hyphenat}
\title       {The First Known Problem That Is FPT with Respect to Node Scanwidth but Not Treewidth}
\titlerunning{Separating Treewidth from Node Scanwidth}

\nolinenumbers

\begin{document}

\maketitle

\begin{abstract}\noindent
  Structural parameters of graphs, such as treewidth, play a central role in
  the study of the parameterized complexity of graph problems.  Motivated by
  the study of parametrized algorithms on phylogenetic networks, scanwidth was
  introduced recently as a new treewidth-like structural parameter for acyclic
  directed graphs (DAGs) that respects the edge directions in the DAG.  The
  utility of this width measure has been demonstrated by results that show that
  a number of problems that are fixed-parameter tractable (FPT) with respect to
  both treewidth and scanwidth allow algorithms with a better dependence on
  scanwidth than on treewidth.  More importantly, these scanwidth-based
  algorithms are often much simpler than their treewidth-based counterparts:
  the name ``scanwidth'' reflects that traversing a tree extension (the
  scanwidth-equivalent of a tree decomposition) of a DAG amounts to
  ``scanning'' the DAG according to a well-chosen topological ordering.

  While these results show that scanwidth is useful especially for solving
  problems on phylogenetic networks, all problems studied through the lens of
  scanwidth so far are either FPT with respect to both scanwidth and treewidth,
  or \Wh{\ell}-hard, for some $\ell \ge 1$, with respect to both.  In this
  paper, we show that scanwidth is not just a proxy for treewidth and provides
  information about the structure of the input graph not provided by treewidth,
  by proving a fairly stark complexity-theoretic separation between these two
  width measures.  Specifically, we prove that \wPDDlong is FPT with respect to
  the scanwidth of the food web but \Wh{\ell}-hard with respect to its
  treewidth, for all $\ell \ge 1$.  To the best of our knowledge, no such
  separation between these two width measures has been shown for any problem
  before.
\end{abstract}

\section{Introduction}

The study of parameterized complexity has shown that, for many computationally
hard problems, the input size alone is a poor predictor for how long it should
take to solve such a problem on a particular input.  It is the \emph{structure}
of the input, expressed by some parameter~$k$ independent of the input size,
that matters more than its size.  A problem $\Pi$ is said to be
\emph{fixed-parameter tractable} (FPT) with respect to this parameter $k$ if
there exists an algorithm that solves this problem on any input of size $n$ and
with parameter $k$ in time $f(k) \cdot \poly(n)$, where $f$ is any computable
function and $\poly(n)$ is any polynomial in $n$ independent of
$k$~\cite{downeybook}. The W-hierarchy provides a classification of problems
believed not to be FPT: Any $\Wh{\ell}$-hard problem, for any $\ell \ge 1$, is
not FPT unless standard complexity-theoretic assumptions
fail~\cite{downeybook}.

When studying graph problems, width measures of graphs have played an important
role in understanding the structure of an input graph. Treewidth is one of the
earliest and most influential width measures~\cite{robertson1986graph} and has
become a cornerstone of algorithm design and structural graph
theory~\cite{cygan,diestel2025graph}.  It is based on a structural
representation of a graph $G$ called a \emph{tree decomposition} of $G$, which
consists of a tree whose nodes have associated small subsets of vertices of $G$
that must satisfy certain conditions. Over the past decades, hundreds of
problems have been shown to be \FPT when parameterized by treewidth,
e.g.~\cite{bodlaender1988dynamic,bodlaender2007treewidth}, while others remain
\Wh1-hard with respect to treewidth~\cite{fellows2011complexity}.  Most of the
FPT-algorithms based on treewidth use dynamic programming over the tree
decomposition.

Phylogenetic networks are directed acyclic graphs (DAGs) representing the
evolution of a set of species. A number of problems on phylogenetic networks
have also been shown to be FPT with respect to treewidth, most notably tree
containment~\cite{vanIerselTreeContainment,bruchhold2025exploiting}, small parsimony~\cite{scornavacca2022treewidth}, and phylogenetic diversity
\cite{PaNDA,MAPPD,MAPPDD}.  Some of these algorithms are rather complex, as a
traversal of the tree decomposition does not correspond to a traversal of the
network that respects edge directions. This motivated the introduction of
scanwidth \cite{berry} as another measure of how tree-like a network is but in
a manner that traversing the tree top-down or bottom-up corresponds to a
traversal of the network in the same direction.  This has led to much simpler
algorithms for a number of problems
\cite{JonesTreeContainment,bruchhold2025exploiting,PaNDA,MAPPDD,scornavacca2022treewidth}.

The scanwidth of a DAG is an upper bound on its
treewidth~\cite{bruchholdThesis}, and so far, simplicity of the algorithms and
a better dependence of the algorithms' running times on scanwidth compared to
parameterizations by treewidth have been the only benefit of scanwidth.  To the
best of our knowledge, no problem was known so far for which there exists a
strong complexity-theoretic separation between these two width measures in the
sense that the problem is FPT with respect to scanwidth and \Wh{1}-hard or
harder with respect to treewidth.


In this paper, we change this by proving that \wPDDlong (\wPDD) is FPT with
respect to scanwidth but even a restricted version of this problem is
\Wh{\ell}-hard with respect to treewidth, for all $\ell \ge 1$. In this
problem, we are given a vertex- and arc-weighted DAG, called a \emph{food web}
in its biological interpretation, together with two integers $B$ and $D$.
Vertices represent species, vertex weights model their individual diversity
scores, and directed arcs represent predator-prey relationships between species
with their weights indicating the importance of the prey for the predator's
survival.  The task is to determine whether there is a set of at most $B$
species whose total diversity is at least $D$ and such that each selected
species has ``sufficient'' prey among the selected species.  We give a formal
definition in \cref{sec:preliminaries}.

Problems of this type arise naturally when combining phylogenetic diversity
with ecological constraints~\cite{1PDDkernel,PDD,moulton,WeightedFW}.
Phylogenetic diversity was originally defined on trees as the total weight of
the minimal subtree spanning a given set of species~\cite{FAITH1992}. While a
subset of $B$ species maximizing phylogenetic diversity can be computed in
polynomial time~\cite{Pardi2005,steel}, incorporating ecological dependencies
makes finding an optimal solution substantially harder~\cite{faller}. To break
this intractability, a simplified model of diversity was introduced in which
each species is assigned an individual diversity score~\cite{faller}, leading
to the formulation of \wPDD and its variants in~\cite{WeightedFW}.

Our first result shows that even a restricted version of \wPDD is intractable
with respect to treewidth. Specifically, we consider \hsPDD, in which each
arc~$(u,v)$ has a weight of $2/\deg^-(v)$, and prove the following theorem:

\begin{theorem}
	\label{thm:tw-hard}
  \hsPDD is $\Wh\ell$-hard, for all $\ell \ge 1$, when parameterized by the
  treewidth of the food web.
\end{theorem}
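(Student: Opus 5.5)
The plan is a parameterized reduction, for each fixed $\ell\ge 1$, from \textsc{Weighted $\ell$-Normalized Satisfiability} parameterized by the weight $k$, which is \Wh{\ell}-complete. The reduction should map an instance with parameter $k$ to an instance of \hsPDD whose food web has treewidth bounded by a function of $k$ (ideally $O(k)$) and does not depend on the formula. The key feature of \hsPDD I intend to exploit is the following. In \hsPDD a selected non-source species $v$ needs at least half of its $\deg^-(v)$ prey to be selected. This is a \emph{threshold} condition, and by padding a vertex with dummy prey that are forced in or forced out, it can simulate $\ge c$ constraints for arbitrary $c$. Such thresholds can encode \emph{numbers} in unary through the \emph{count} of selected vertices among a pool. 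This is the classical route by which capacitated or threshold problems become hard for treewidth: information is carried by counts at a few hub vertices rather than by many separators.

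\textbf{Step 1 (selection gadget).} For each of the $k$ slots $i\in\{1,\dots,k\}$ I create a hub vertex $h_i$ with a large pool of private prey. Leaves or sources carry weight and cost budget. I combine two opposite thresholds, using a complementary pool attached to a second hub, so that in any solution meeting $D$ within $B$ exactly $j_i$ pool vertices of slot $i$ are chosen. I also design the weights and $B,D$ so that all hubs must be selected, via large diversity on the hubs. The number $j_i\in\{1,\dots,n\}$ encodes ``slot $i$ sets variable $x_{j_i}$ true''. Ordering slots by $j_1<\dots<j_k$ can be enforced by comparison thresholds between consecutive hubs, which removes the need for distinctness checks.

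\textbf{Step 2 (evaluation gadget).} The formula has depth $\ell$ with alternating large AND/OR gates. I would replace it by a DAG of gate vertices whose ``at least half'' rule, after padding, implements AND ($\ge$ all inputs) and OR ($\ge 1$ input). The bottom literals have to read off the hub counts. For the literal $x_j$ and slot $i$ I use a small test vertex that is selectable iff $j_i=j$, realized by two threshold vertices on slot $i$'s pool checking $\ge j$ and $\le j$ (via the complement pool). The output gate gets huge diversity, so reaching $D$ forces it to be selected.

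\textbf{Step 3 (width bound and correctness).} I will show that the food web has treewidth $f(k)$. The idea is that the $O(k)$ hubs, together with pool-boundary structure, form a small set whose removal leaves a forest, provided the formula is given as a tree (formulas are trees) and every literal test attaches only to the hubs and pools. This is exactly where the danger lies. Test vertices for different $j$ must not share long paths through the pools, so I would make the pools ``prefix-ordered'' paths. Then a $\ge j$ test depends on one pool vertex, namely the $j$-th, which must be selected, together with its predecessors via chain dependencies. Each test then touches only $O(1)$ pool vertices plus the hub, and a path decomposition along the pools with the $O(k)$ hubs added to every bag should give width $O(k)$. Correctness is the usual two directions: a weight-$k$ satisfying assignment yields a selection, and any selection meeting $B$ and $D$ yields consistent counts, hence an assignment, and the selected output gate forces satisfaction.

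\textbf{Main obstacle.} I expect the main obstacle to be making the arithmetic comparisons ($\le j$ in particular, since thresholds naturally give only lower bounds on selected prey) work under the rigid weight rule $2/\deg^-(v)$, while keeping every test vertex attached to only boundedly many pool vertices so that the treewidth stays $f(k)$. My first attempt would encode $\le j$ as $\ge n-j$ on a complementary chain whose selection is tied to the original by the budget $B$. The tightness of $B$ then forces the two chains to be exact complements.
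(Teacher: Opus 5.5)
\textbf{The width bound in Step~3 fails, and the failure is structural.} You build a single prefix-ordered pool per slot $i$. Every occurrence of the literal $x_j$ anywhere in the formula is attached, through its test vertex for slot $i$, to the same $j$-th pool vertex $p_{i,j}$.

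For each $j$, contract $p_{1,j}$ together with all occurrence leaves of $x_j$ and their slot-$1$ test and threshold vertices. This set is connected, and the sets for different $j$ are disjoint. Every gate directly above an occurrence of $x_j$ then becomes adjacent to the contracted vertex. So the graph obtained from the formula tree by identifying all occurrences of each variable is a minor of your food web. For $\ell=2$ this is the clause--variable incidence graph of the CNF. Hence $\tw(\Food)$ is at least the treewidth of that graph.

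That treewidth does not depend on $k$ and is unbounded. For any fixed $k$, there are formulas whose incidence graphs contain arbitrarily large grid minors. Consequently:
\begin{itemize}
\item removing the hubs does not leave a forest;
\item a path decomposition along the pools must hold many gates in the same bag;
\item your map is not a parameterized reduction.
\end{itemize}
You also cannot repair this by restricting to formulas where this treewidth is at most $f(k)$. Weighted satisfiability is \FPT parameterized by that treewidth, by dynamic programming that tracks gate values and the number of true variables. So the restricted source problem is not \Wh{\ell}-hard unless \FPT equals \Wh{\ell}.

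The ``$\le j$'' comparison you flag as the main obstacle is minor by comparison. A further issue: you need a tight budget to force the complementary chains to be exact complements. But the number of gate and padding vertices needed to make the output gate viable depends on which children of the OR gates are used. So you would also have to equalize witness sizes.

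\textbf{The missing idea} is that the solution size $k$ gives you no control over the structure of the instance. Hardness for every level \Wh{\ell} with respect to a width measure comes from reducing from a problem whose parameter is already a width measure, and preserving that width.

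The paper reduces from \CDSLong, which is \Wh{\ell}-hard for all $\ell\ge 1$ parameterized by pathwidth (via XNLP-hardness). Each vertex $v$ becomes a selector widget $S_v$ and quota widgets $Q_v^1,Q_v^2$. Each edge $\{u,v\}$ becomes quota widgets $Q_{u,v},Q_{v,u}$. All widgets are trees hanging off their roots. The roots span a subgraph of a $\phi$-expansion (with $\phi\le 3$) of a subdivision of $G$. This gives $\tw(\Food)\le 3\tw(G)+2$, with no formula to embed.

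Capacities are simulated by $Q_u^1$. Each selected $e_{u,v}$ lowers the cost of $Q_u^1$ by $C+1$, exactly offsetting the $C+1$ vertices needed in $Q_{u,v}$, but only for the first $c(u)$ such edge vertices. The diversity target forces at least $k$ selector roots. The budget then forces exactly $k$ of them and $|f^{-1}(u)|\le c(u)$.

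If you insist on starting from weighted satisfiability, you would have to evaluate the formula sequentially. The $k$ indices would be copied along a path of gadgets, so that each literal reads only its own local copy. That amounts to reproving an XNLP-type hardness result, and it requires count-transmitting gadgets between consecutive copies that your proposal does not contain.
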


This result complements a previously known \XP algorithm for \hsPDD
parameterized by treewidth~\cite{WeightedFW} and rules out the existence of an
\FPT algorithm  for this problem under standard complexity-theoretic
assumptions. In contrast, we show that \wPDD is FPT with respect to node
scanwidth:

\begin{theorem}
	\label{thm:nsw-fpt}
  \wPDD can be solved in~$\Oh\left(2^{\nsw(\Food)} \cdot n^3\right)$ time if a
  tree extension of the food web $\Food$ of node scanwidth $\nsw(\Food)$ is
  given. Here, $n$ is the number of species, that is, the number of vertices in
  $\Food$.
\end{theorem}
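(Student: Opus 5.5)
We use dynamic programming over the given tree extension $\Gamma$ of $\Food$. Recall the relevant definitions. A tree extension is a rooted tree on the vertex set of $\Food$ such that every arc of $\Food$ joins an ancestor--descendant pair, with the prey/predator orientation respecting the tree. For each vertex $v$, the node scanwidth counts the vertices of $\Gamma_v$ (the subtree rooted at $v$) that have an in-neighbour (predator) outside $\Gamma_v$; call this set $S_v$, so $|S_v| \le \nsw(\Food)$. We orient the traversal so that $\Gamma$ is processed bottom-up and prey lie below their predators. Each selected species $x$ must satisfy the constraint $\sum_{y \text{ selected prey of } x} w(x,y) \ge \text{threshold}(x)$. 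Since $x$'s prey all lie in $\Gamma_x$, this constraint can be checked completely when $x$ itself is processed, provided we know which of its prey are selected.

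The key point is that the prey of $x$ lying in different child subtrees of $x$, or deeper, are exactly vertices of the sets $S_c$ for the children $c$ of $x$ (or $x$'s own children). This holds because a prey $y$ of $x$ in $\Gamma_c$ has the in-neighbour $x$ outside $\Gamma_c$. We therefore define the table
\[
\mathrm{DP}_v[Y, b] = \text{max diversity of a set } X \subseteq V(\Gamma_v) \text{ with } |X| = b,\; X \cap S_v = Y,
\]
where every selected vertex of $X$ already satisfies its prey constraint. This gives $2^{|S_v|} \cdot (n+1)$ entries per vertex.

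To compute $\mathrm{DP}_v$ from the children $c_1,\dots,c_k$ of $v$, we use $S_v \subseteq \{v\} \cup \bigcup_i S_{c_i}$, and every vertex of $S_{c_i}$ not in $S_v$ has all its outside in-neighbours equal to $v$. We guess whether $v$ is selected. If it is, we need the selected prey of $v$; these are exactly the vertices in $N^-(v)\cap \bigcup_i S_{c_i}$ that are selected, so their weight sum is determined by the restrictions $Y_i \subseteq S_{c_i}$. The sets $S_{c_i}$ are pairwise disjoint because the subtrees are disjoint.

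The obstacle is combining the children without enumerating the product $\prod_i 2^{|S_{c_i}|}$ jointly against the prey-weight threshold, which could be large. To handle it, we do not aggregate over arbitrary $Y_i$. Instead, for the target $Y \subseteq S_v$ we fix $Y \cap S_{c_i}$ for each child. The remaining freedom lies in the choice of selected vertices in $S_{c_i}\setminus S_v$, which are prey of $v$ only. For each child $c_i$ and each budget $b_i$, we keep the option pairs consisting of the diversity and the prey weight contributed to $v$. A child's choice can be enumerated in $2^{|S_{c_i}\setminus S_v|}$ ways, and the total over children is bounded by $2^{\nsw}$ up to the split. To avoid a two-criterion knapsack, we exploit the structure of the problem: if $v$ is selected we need prey weight at least the threshold, otherwise any choice works. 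Since the sets $S_{c_i}\setminus S_v$ are fixed, we can enumerate the subset $Z$ of $\bigcup_i (S_{c_i}\setminus S_v)$ globally, in $2^{\nsw}$ ways. Here $\nsw$ bounds $|S_v\cup\bigcup_i(S_{c_i}\setminus S_v)|$; this must be checked against the precise definition of node scanwidth via the set of vertices in $\Gamma_v$ adjacent to $v$'s ancestors-or-$v$, and it is the step I expect to need care. Once $Z$ is fixed, the prey check is a single comparison, and the children then combine via a standard $(\max,+)$ knapsack over budgets in $O(n^2)$ per vertex.

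Summing over $n$ vertices, $2^{\nsw}$ guesses, and $O(n^2)$ for the budget convolution gives $\Oh(2^{\nsw(\Food)} n^3)$. The answer is \yes iff $\max_{b\le B}\mathrm{DP}_{\text{root}}[\emptyset,b]\ge D$. If $\Gamma$ is a forest, we add a virtual root. Correctness follows by induction on $\Gamma$, using that all of a vertex's prey lie below it and that every arc between different child subtrees is impossible in a tree extension.
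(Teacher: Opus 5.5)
There is a genuine gap, and it starts in your setup, not only in the combination step. In a tree extension of $\Food$, every arc $(u,v)$ (that is, $u$ a prey of $v$) has $u$ as a proper ancestor of $v$. So prey lie \emph{above} their predators, and since the tree extension is part of the input you cannot reorient it. Likewise, the node width of $v$ counts the proper \emph{ancestors} of $v$ that have an out-neighbour in $Z_v$, not the vertices of $\Gamma_v$ that have a neighbour outside $\Gamma_v$. Hence your premise that all prey of $x$ lie in $\Gamma_x$ is false for the given tree, and your sets $S_v$ are not bounded by $\nsw(\Food)$.

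The step you flagged also fails inside your own framework, because $\bigl|\bigcup_i (S_{c_i}\setminus S_v)\bigr|$ is not bounded by your width. A predator $v$ at the root with prey leaves $y_1,\dots,y_t$ has width $1$ in your sense, yet the two-criterion knapsack you hoped to avoid is really there. Take integers $0<a_i\le T<M$, $\gamma(y_i,v)=a_i/T$, $d(y_i)=M-a_i$, $d(v)>tM$, $B=k+1$ and $D=d(v)+kM-T$. This instance is a yes-instance if and only if some $k$ of the $a_i$ sum to exactly $T$. So no algorithm that is polynomial for bounded width in your sense exists unless $\mathrm{P}=\mathrm{NP}$, and no repair of the enumeration step can save the argument.

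The missing idea is to exploit the correct orientation. Let $A_v$ be the set of proper ancestors of $v$ with an arc into $Z_v$, so $|A_v|\le\nsw(\Food)$. Index the table by $v$, a guessed subset $A'\subseteq A_v$ of selected ancestors, and a size $\ell$. Each entry stores the best diversity of a size-$\ell$ set $P\subseteq Z_v$ in which every vertex $w$ is a source or has $\gamma_w(P\cup A')\ge 1$.

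Because $N^-(v)\subseteq A_v$, whether $v$ may be selected is the single test $\gamma_v(A')\ge 1$. This also forces $\deg^-(v)\le\nsw(\Food)$, so the star above has node scanwidth $t$. Because $A_{w_i}\subseteq A_v\cup\{v\}$ for every child $w_i$, the children are independent once $A'$ and the choice for $v$ are fixed. They therefore combine by a $(\max,+)$ convolution over sizes only, with no weight dimension, which gives the $\Oh(2^{\nsw(\Food)}\cdot n^3)$ bound.
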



\section{Preliminaries}

\label{sec:preliminaries}

\subparagraph*{Graphs and functions.}

We use standard graph terminology as in~\cite{diestel2025graph}. We use
$\deg^-_G(v)$ to denote the in-degree of a vertex $v$ in a graph $G$ and omit
the subscript when the graph is clear from context. In addition, we use the
terms \emph{vertex} and \emph{species} interchangeably, given that vertices in
food webs represent species.

Given a function $f : A \to \RR$ and a subset $B \subseteq A$, we define $f(B)
:= \sum_{b \in B} f(b)$.

\subparagraph*{Food webs, viability, and phylogenetic diversity.}

A \emph{food web~$\Food=(X,A)$} on a set of species~$X$ is a directed acyclic
graph (DAG). A species $x$ is a \emph{prey} of a species $y$, and $y$ is a
\emph{predator} of $x$, if~$(x,y) \in A$. Sources are species without prey.

Given a food web~$\Food=(X,A)$, we define an arc weight function~$\gamma: A \to
(0,1]$, where~$\gamma(u,v)$ measures the importance of $u$ for the survival of
$v$. A set of species~$S \subseteq X$ is~\emph{$\gamma$-viable}
if~$\gamma_{v}(S) := \gamma\{ (u,v) \in A \mid u\in S \} \ge 1$, for every
non-source vertex $v \in S$. We say~$S$ is \emph{viable} if~$\gamma$ is clear
from context.  Intuitively, this means that if we make efforts to save the
species in~$S$ from extinction, then for every species we save, we also save
enough of its prey to enable it to survive.

Phylogenetic diversity is a measure of the diversity of the genetic material of
a set of species.  This is usually represented by assigning weights to the
edges of the phylogenetic tree representing the evolution of this set of
species.  In the version of the phylogenetic diversity problem we study, the
phylogenetic tree is a star, which means that we can represent the contribution
of each species to the total diversity of a subset of $X$ as a vertex weight
function~$d : X \to \NN^+$.

\problemdef{\textsc{Weighted Phylogenetic Diversity with Dependencies} (\wPDD)}{
  A food-web~$\Food = (X,A)$, a diversity function $d : X \to \NN^+$, an
  arc weight function~$\gamma: A \to (0,1]$, and integers~$B$ and~$D$.
}{
	Is there a \gviable set~$S\subseteq X$ such that~$|S|\le B$ and $d(S)\ge D$?
}

\asPDD, for~$\alpha \in (0,1]$, is a special case of \wPDD in which there is no
arc weight function and a set $S$ is viable if $S$ contains at least $\alpha
\cdot \deg^-(v)$ of the prey of every species $v \in S$.  This is equivalent to
setting $\gamma(u,v) = 1/(\alpha \cdot \deg^-(v))$ for each arc~$(u,v) \in
\Food$.

\subparagraph*{Treewidth and scanwidth.}

A \emph{tree decomposition} of an undirected graph~$G$ consists of a rooted
tree~$T$ and, for every node~$t \in T$, a bag $Q_t \subseteq V(G)$ such that,
for every edge $\{u,v\} \in E(G)$, there exists a bag $Q_t$ with $u,v \in Q_t$
and, for every vertex $v \in V(G)$, the nodes $t \in T$ such that $v \in Q_t$
induce a subtree of $T$. The \emph{width} of a tree decomposition is the size
of its largest bag minus one.  The \emph{treewidth} $\tw(G)$ of an undirected
graph $G$ is the minimum width of all tree decompositions of $G$
\cite{robertson1986graph}. The treewidth of a directed graph is the treewidth
of its underlying undirected graph.

A \emph{tree extension} of a DAG~$G=(V,A)$ is a rooted tree~$T=(V,A')$ on the
set of vertices of $G$ such that, for every arc~$(u,v)\in A$, $u$ is a (proper)
ancestor of $v$ in~$T$ \cite{berry}.  The \emph{node width} of a vertex $v \in
T$ is the number of its proper ancestors in $T$ that have out-neighbours in~$G$
that are descendants of $v$ in $T$ (including $v$ itself).  The node width of
$T$ is the maximum node width of all its vertices. Similarly, the \emph{edge
width} of a node $v \in T$ is the number of arcs~$(u,w) \in G$ such that $u$ is
a proper ancestor of $v$ and $w$ is a descendant of~$v$ in~$T$, and the edge
width of $T$ is the maximum edge width of all its vertices. The
\emph{scanwidth} $\sw(G)$ and \emph{node scanwidth} $\nsw(G)$ of a DAG $G$ are
the minimum edge width and node width, respectively, of all tree extensions of
$G$~\cite{berry,holtgrefeThesis,holtgrefe2024exact}. These definitions
immediately imply that $\nsw(G) \le \sw(G)$, for every DAG $G$. Moreover,
$\tw(G) \le \nsw(G)$~\cite[Lemma~3.3]{bruchholdThesis}.

\subparagraph*{Parameterized complexity.}

\ifConference
A \emph{parameterization} of a formal language (decision problem) $\L \subseteq
\Sigma^*$ over an alphabet $\Sigma$ is a function $\kappa : \Sigma^* \to \NN$.
$\L$ is \emph{fixed-parameter-tractable} (FPT) with respect to $\kappa$ if
there exists an algorithm that takes $f(\kappa(\sigma)) \cdot p(|\sigma|)$ time
to decide whether~$\sigma \in \L$, where $f(\cdot)$ is any computable function
and $p(\cdot)$ is a polynomial. See \cite{cygan} for a detailed introduction.
\Wh{\ell}-hard problems, for $\ell \ge 1$, are believed not to
be~\FPT~\cite{downeybook}. In this paper, we are interested in
parameterizations of \wPDD by the treewidth or node scanwidth of the food web,
that is $\kappa((\Food,d,\gamma,B,D)) = \tw(\Food)$ or
$\kappa((\Food,d,\gamma,B,D)) = \nsw(\Food)$, for every instance
$(\Food,d,\gamma,B,D)$ of \wPDD.
\else
There
exists a natural bijection between decision problems and formal languages over
some alphabet $\Sigma$.  A \emph{parameterized language} is subset $L \subseteq
\Sigma^* \times \NN$.  For a pair $(\sigma,k) \in L$, $k$ is called the
\emph{parameter} of this pair.  A parameterized language is
\emph{fixed-parameter tractable} (\FPT) if there exists an algorithm $\mathcal{A}$ that
decides for any input $(\sigma,k) \in \Sigma^* \times \NN$ whether $(\sigma,k)
\in L$, and does so in time $f(k) \cdot p(|\sigma|)$, where
$f(\cdot)$ is some computable function and $p(\cdot)$ is some polynomial.  A
parameterized language is \emph{slicewise polynomial} (\XP) if there exists an
algorithm $\mathcal{A}$ that decides for any input $(\sigma,k) \in \Sigma^* \times \NN$
whether $(\sigma,k) \in L$, and does so in time at most $|\sigma|^{p(k)}$,
where $p(\cdot)$ is some polynomial.  Every problem that is \FPT is also in \XP.
$\Wh{h}$-hard problems, for $h \ge 1$, are unlikely to be \FPT.
See~\cite{cygan,downeybook} for more detailed background on parameterized
complexity.
\fi

\section{\boldmath \hsPDD Parameterized by Treewidth}

In this section, we prove \Cref{thm:tw-hard}:
\hsPDD is \Wh{\ell}-hard with respect to treewidth, for all~$\ell\in\NN_{\ge 1}$.
We note that this proof can easily be adjusted to show hardness of \asPDD,
for any~$\alpha \in (0,1)$.
In contrast, \fsPDD is \FPT when parameterized by treewidth~\cite{WeightedFW}.

Our proof uses a reduction from \CDSLong, which is the following:
%
\problemdef{\CDSLong (\CDS)}{
	An undirected graph~$G = (V, E)$, a capacity function~$c : V \to \NN$, and
	an integer $k \in \NN$.
}{
  Is there a pair~$(S,f)$, where $S \subseteq V$ is a set of at most $k$
  vertices, and $f : V \setminus S \to S$ is a function that satisfies
	$|f^{-1}(v)| \le c(v)$ for all $v \in S$?
}
%
Intuitively, $u \in S$ is ``used to dominate'' $v \in V \setminus S$ if $f(v) =
u$, and the capacity constraints limit the number of vertices each vertex in
$S$ can be used to dominate. \CDS is \Wh\ell-hard with respect to the graph's
pathwidth, for all $\ell \ge 1$~\cite{bodlaender2025xnlp}, and therefore also
with respect to its treewidth.
Given an instance $\Instance := (G,c,k)$ of \CDS, we construct, in polynomial time, an instance
$\Instance' := (\Food,d,B,D)$ of~\hsPDD such that $|\Food| \in \poly(|G|)$,
$\tw(\Food) \in \Oh(\tw(G))$, and \Instance is a \yes-instance of
\CDS if and only if~$\Instance'$ is a \yes-instance of \hsPDD.
This proves \cref{thm:tw-hard}.

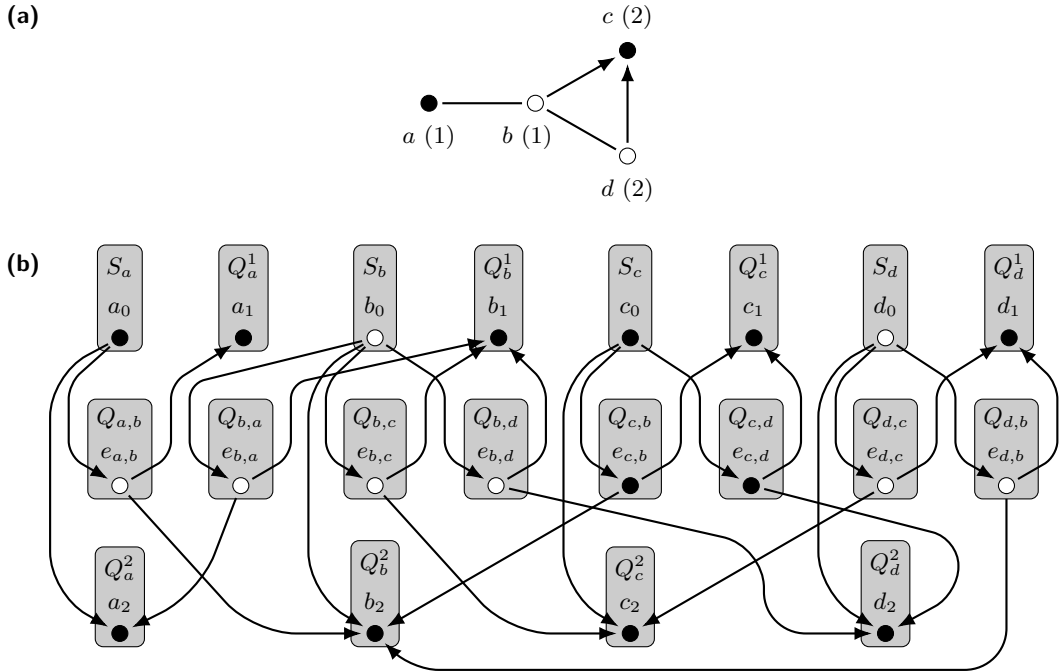
\begin{figure}[t]
	\centering
	\begin{tikzpicture}[x=14mm,y=14mm,remember picture]
      \small
			\path
			node [scale=2,chosen,label=below:$a$ (1)] (a) {}
			++(0:1) node [scale=2,not-chosen,label={[xshift=-3pt]below:$b$ (1)}] (b) {}
			+(30:1) node [scale=2,chosen,label=above:$c$ (2)] (c) {}
			+(330:1) node [scale=2,not-chosen,label=below:$d$ (2)] (d) {};
			\path [edge]
			(a) -- (b) -- (d);
			\path [edge,->]
			(b) edge (c) (d) edge (c);
    \path (current bounding box.north west) coordinate (a-top-left);
	\end{tikzpicture}\\[\bigskipamount]
	\begin{tikzpicture}[x=14mm,y=14mm,remember picture]
      \small
			\begin{scope}[overlay]
				\node (measure) {\phantom{$Q_{b,d}$}};
				\node [fit=(measure)] (measure) {};
				\path let \p0 = (measure.west),
				\p1 = (measure.east),
				\n0 = {\x1 - \x0} in
				foreach \x/\y/\is/\shift in {a/b/not-/0,b/a/not-/0.375,b/c/not-/0.5,b/d/not-/0.375,c/b//0.5,c/d//0.375,d/c/not-/0.5,d/b/not-/0.375} {
					++(0:\shift) ++(0:\n0) node [scale=2,\is chosen,label={[name=le\x\y]above:$e_{\x,\y}$}] (e\x\y) {}
					node [above=0mm of le\x\y] (Q\x\y) {$Q_{\x,\y}$}
				}
				foreach \x/\is/\ref in {a//ab,b/not-/bc,c//cb,d/not-/dc} {
					(e\ref) ++(90:1.4) node [scale=2,\is chosen,label={[name=l\x0]above:\vphantom{$b_0$}$\x_0$}] (\x0) {}
					++(0:0.4) ++(0:\n0) node [scale=2,chosen,label={[name=l\x1]above:\vphantom{$b_1$}$\x_1$}] (\x1) {}
				}
				foreach \x in {a,b,c,d} {
					(\x0) +(270:2.8) node [scale=2,chosen,label={[name=l\x2]above:$\x_2$}] (\x2) {}
					node [above=0mm of l\x0] (S\x) {\vphantom{$Q_d^1$}$S_\x$}
					node [above=0mm of l\x1] (Q\x1) {\vphantom{$Q_d^1$}$Q_\x^1$}
					node [above=0mm of l\x2] (Q\x2) {\vphantom{$Q_d^2$}$Q_\x^2$}
				};
			\end{scope}
			\begin{scope}[on background layer]
				\path foreach \x in {ab,ba,bc,bd,cb,cd,db,dc} {
					node [widget,fit={(e\x) (Q\x)},inner sep=0pt] (Q\x) {}
				}
				foreach \x in {a,b,c,d} {
					node [widget,fit={(S\x) (\x0)},inner sep=0pt] (S\x) {}
					node [widget,fit={(Q\x1) (\x1)},inner sep=0pt] (Q\x1) {}
					node [widget,fit={(Q\x2) (\x2)},inner sep=0pt] (Q\x2) {}
				};
			\end{scope}
			\path
			($(Qab.east)!0.333!(Qba.west)$) coordinate (xab1)
			($(Qab.east)!0.667!(Qba.west)$) coordinate (xba0)
			($(Qba.east)!0.25!(Qbc.west)$) coordinate (xba1)
			($(Qba.east)!0.5!(Qbc.west)$) coordinate (xb0b2)
			($(Qba.east)!0.75!(Qbc.west)$) coordinate (xbc0)
			($(Qbc.east)!0.333!(Qbd.west)$) coordinate (xbc1)
			($(Qbc.east)!0.667!(Qbd.west)$) coordinate (xbd0)
			($(Qbd.east)!0.25!(Qcb.west)$) coordinate (xbd1)
			($(Qbd.east)!0.5!(Qcb.west)$) coordinate (xc0c2)
			($(Qbd.east)!0.75!(Qcb.west)$) coordinate (xcb0)
			($(Qcb.east)!0.333!(Qcd.west)$) coordinate (xcb1)
			($(Qcb.east)!0.667!(Qcd.west)$) coordinate (xcd0)
			($(Qcd.east)!0.25!(Qdc.west)$) coordinate (xcd1)
			($(Qcd.east)!0.5!(Qdc.west)$) coordinate (xd0d2)
			($(Qcd.east)!0.75!(Qdc.west)$) coordinate (xdc0)
			($(Qdc.east)!0.333!(Qdb.west)$) coordinate (xdc1)
			($(Qdc.east)!0.667!(Qdb.west)$) coordinate (xdb0);
			\path let \p0 = (Qab.east),
			\p1 = (xab1),
			\n0 = {\x1 - \x0} in
			(Qab.west) +(180:\n0) coordinate (xab0)
			(Qdb.east) +(0:\n0) coordinate (xdb1)
			(Qab.north) +(90:2mm) coordinate (bend-y)
			($(Qab.north)!0.5!(Sa.south)$) coordinate (bend-y2)
			(Qab.west) +(180:2*\n0) coordinate (xa0a2);
      \begin{scope}[overlay]
        \path [name path=a] (xab0) -- +(270:1);
        \path [name path=b] (eab) -- +(150:1);
        \path [name intersections={of=a and b}] (intersection-1) coordinate (bend-y3);
        \path [name path=a] (xa0a2) -- +(270:2);
        \path [name path=b] (a2) -- +(150:1);
        \path [name intersections={of=a and b}] (intersection-1) coordinate (bend-y4);
      \end{scope}
			\begin{scope}[rounded corners=2mm]
				\foreach \x/\y in {a/b,b/a,b/c,b/d,c/b,c/d,d/b,d/c} {
					\path [edge,->] (\x0) -- (x\x\y0 |- bend-y) -- (x\x\y0 |- bend-y3) -- (e\x\y);
					\path [edge,->] (e\x\y) -- (x\x\y1 |- bend-y3) -- (x\x\y1 |- bend-y) -- (\x1);
				}
			\end{scope}
			\begin{scope}[rounded corners=5.5mm]
				\foreach \x in {a,b,c,d} {
					\path [edge,->] (\x0) -- (x\x0\x2 |- bend-y2) -- (x\x0\x2 |- bend-y4) -- (\x2);
				}
				\path
				(Qa2.north) +(90:1mm) coordinate (bend-y)
				($(Qa2.east)!0.333!(xb0b2)$) coordinate (xba)
				($(Qa2.east)!0.667!(xb0b2)$) coordinate (xab)
				($(Qb2.east)!0.333!(xc0c2)$) coordinate (xcb)
				($(Qb2.east)!0.667!(xc0c2)$) coordinate (xbc)
				($(Qc2.east)!0.333!(xd0d2)$) coordinate (xdc)
				($(Qc2.east)!0.667!(xd0d2)$) coordinate (xbd)
				($(Qd2.east)!0.5!(edb)$) coordinate (xcd);
				\path [edge,->] (eba) -- (xba |- bend-y4) -- (a2);
				\path [edge,->] (ecb) -- (b2);
				\path [edge,->] (edc) -- (c2);
				\path [edge,->,rounded corners=3.5mm] (ecd) -- (xcd |- bend-y) -- (xcd |- bend-y4) -- (d2);
				\path [edge,->]
				(eab) -- (xab |- b2) -- (b2);
				\path [edge,->]
				(ebc) -- (xbc |- c2) -- (c2);
				\path [edge,->]
				(ebd) -- (xbd |- Qa2.north) -- (xbd |- d2) -- (d2);
				\path [edge,->]
				let \p0 = (Qb2.south),
				\p1 = (b2),
				\n0 = {\y0 - 3mm},
				\n1 = {\x1 + (\y1 - \n0)} in
				(\n1,\n0) coordinate (bend-db)
        (edb) -- (edb |- bend-db) {[rounded corners=3.5mm] -- (bend-db) -- (b2)};
			\end{scope}
    \path (current bounding box.north west) node [below left] (b-label) {\sffamily\bfseries (b)};
    \begin{scope}[overlay]
      \path (b-label |- a-top-left) node [below] {\sffamily\bfseries (a)};
    \end{scope}
	\end{tikzpicture}%
  \caption{The reduction from \CDS to \hsPDD.  (a) An input graph~$G$ of \CDS.
    The capacities are shown beside each vertex.  Solid vertices are in a
    dominating set $S$; hollow vertices are not.  Arrows on edges represent
    the mapping $f$ from vertices in $V \setminus S$ to $S$.  (b) The food web
    \Food constructed from $G$. Only the root of each widget is shown.  Solid
    vertices are in a viable set $S'$ corresponding to $(S,f)$.  Hollow
  vertices are not.}
	\label{fig:reduction}
\end{figure}

We build $\Food$ from two types of widgets, as illustrated in
\cref{fig:reduction}: \emph{selector widgets} and \emph{quota widgets}.  We
give their formal definitions after giving an overview of the reduction.  Each of
these widgets is a tree with all its arcs directed towards the root.  The only
vertex in each widget with neighbours outside the widget is the root. For each
vertex $v \in V(G)$, $\Food$ contains one selector widget $S_v$ with root $v_0$
and two quota widgets $Q_v^1$ and $Q_v^2$ with roots~$v_1$ and~$v_2$,
respectively. For every edge $\{u,v\} \in E(G)$, $\Food$ contains two quota
widgets~$Q_{u,v}$ and~$Q_{v,u}$ with roots $e_{u,v}$ and $e_{v,u}$,
respectively.  We add an arc $(v_0,v_2)$ for each vertex~$v \in V(G)$, and
arcs~$(u_0,e_{u,v})$, $(e_{u,v},u_1)$, and $(e_{v,u},u_2)$ for each edge $\{u,v\}
\in E(G)$.  We assign diversity scores to the vertices in $\Food$ and choose a
target diversity $D$ that force any solution of \hsPDD on $\Instance'$ to
contain $v_1$ and $v_2$, for every vertex $v \in G$, as well as at least~$k$ of
the vertices in~$X = \{v_0 \mid v \in G\}$.  The selector widgets ensure that,
when we add a vertex~$v_0$ to a viable set in $\Food$, a certain number of
vertices in $S_v$ also need to be added. Together with a carefully chosen
budget, this ensures that at most $k$ vertices in~$X$ are in a solution, that
is, we choose exactly $k$ vertices in~$X$, representing a dominating set $S$ of
size $k$.  The quota widgets are designed to ensure that adding the root of
such a widget to the viable set requires us to add a certain number of
additional vertices inside the widget; if we add some prey of the root outside
the quota widget, then the number of vertices inside the quota widget that need
to be added decreases. For each vertex $v_2$, we choose the parameters of the
quota widget~$Q_v^2$ so that adding $v_2$ without adding any of its prey
outside $Q_v^2$ is impossible within our budget.  Thus, $v_2$ can be added only
if $v_0$ or some vertex $e_{u,v}$, for an edge $\{u,v\}$ incident to~$v$, is
added. The parameters of the quota widget $Q_{u,v}$ are chosen so that this
requires $u_0$ to be added as well. Thus, each vertex $v$ is either in $S$ or
has a neighbour $f(v) = u$ in~$S$, that is, $S$ is a dominating set. Finally,
adding the vertices $u_1$, for all $u \in G$, without adding any of the
vertices $e_{u,v}$ corresponding to the edges of $G$ would require us to add a
certain fixed number of vertices from the quota widgets $Q_u^1$, for all $u \in
G$. Whenever we add a vertex $e_{u,v}$, the required number of vertices in
$Q_u^1$ decreases, and the parameters of the quota widgets are chosen so that
this decrease in the cost of adding $u_1$ completely offsets the cost of
adding~$e_{u,v}$.  However, the structure of the quota widgets ensures that this
offset of the cost of adding edge vertices~$e_{u,v}$ is realized only for up to
$c(u)$ such ``edge vertices'' corresponding to edges incident to $u$.
Protecting additional edge vertices forces us to exceed the budget.  This
ensures that at most $c(u)$ edge vertices $e_{u,v}$ are added for each vertex
$u \in G$, that is, $u$ is used to dominate at most $c(u)$ vertices in $G$.

It remains to define the selector and quota widgets formally and formalize this
construction. We start by defining the following values, where $n$ and $m$ are
the numbers of vertices and edges in $G$, respectively:
\begin{align*}
  C &:= 3m + 1, && \text{(parameter used to define the other values)}\\
  B &:= C \cdot k + C \cdot \sum_{v \in G}c(v) + 2n + 4m,&& \text{(the budget)}\\
  D_0 &:= 2n \cdot (C + B) + 2m \cdot (2C + 5),&&\text{(diversity of vertices in dominating set)}\\
  D_1 &:= (n + 1) \cdot D_0,&&\text{(diversity of dominated vertices)}
\end{align*}
The values $C$ and~$B$ are also used to define the in-degrees of certain
vertices in $\Food$.

Every selector widget $S_v$ consists of its root $v_0$ and $2(C - 1)$
additional vertices, which are prey of $v_0$.  This is illustrated in
\cref{fig:selector-widget}.

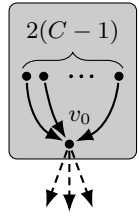
\begin{figure}
	\centering
  \begin{tikzpicture}[x=9mm,y=9mm]
    \small
		\path
		node [chosen,label={[xshift=4pt,yshift=2pt]above:$v_0$}] (r) {}
		++(90:1) ++(180:0.375) node [chosen] (b) {}
		+(180:0.25) node [chosen] (a) {}
		++(0:0.4) node [dot] {}
		++(0:0.15) node [dot] {}
		++(0:0.15) node [dot] {}
		+(0:0.4) node [chosen] (c) {};
		\path [edge,->]
		(a) edge [bend right] (r)
		(b) edge [bend right=10] (r)
		(c) edge [bend left] (r);
		\path [placeholder,->]
		(r) edge +(250:1) edge +(270:1) edge +(290:1);
		\path
		(a) +(-1.5pt,3pt) coordinate (w)
		(c) +(1.5pt,3pt) coordinate (e)
		($(w)!0.5!(e)$) +(90:6pt) node [anchor=south] (C) {$2(C-1)$};
		\draw [decorate,decoration={brace,amplitude=6pt}] (w) -- (e);
		\begin{scope}[on background layer]
			\node [widget,fit=(r) (a) (c) (C)] {};
		\end{scope}
	\end{tikzpicture}
	\caption{A selector widget $S_v$.  Dashed arcs indicate predators of $v_0$
		outside of $S_v$.  There are no other arcs to or from vertices outside of $S_v$.
	}
	\label{fig:selector-widget}
\end{figure}

\begin{observation}
	\label{obs:selector-widget-size}
	A selector widget of type $S(C)$ has $2C - 1$ vertices and $2C - 2$ arcs.
\end{observation}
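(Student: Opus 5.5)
This is a direct count from the definition of a selector widget given just above, where the widget $S(C)$ is the selector widget $S_v$ with parameter $C$. By definition, $S_v$ consists of its root $v_0$ together with $2(C-1)$ additional vertices, and each additional vertex is a prey of $v_0$. The vertex count is therefore $1 + 2(C-1) = 2C - 1$.

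For the arcs, the plan is to read off from the definition (and \cref{fig:selector-widget}) which arcs lie inside $S_v$. Each of the $2(C-1)$ additional vertices contributes exactly one arc, namely its arc to $v_0$. These additional vertices have no other arcs: the construction states that the root is the only vertex of a widget with neighbours outside it, and the widget is a tree with all arcs directed towards the root. The arcs leaving $v_0$ (the dashed ones in the figure) go to vertices outside $S_v$ and are not counted as arcs of the widget. This gives $2C - 2$ arcs.

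As a consistency check, the widget is a tree on $2C-1$ vertices, so it must have $(2C-1) - 1 = 2C - 2$ edges, matching the direct count.

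The only step needing care is the counting convention: the arcs of the widget are exactly the arcs with both endpoints in $S_v$, so the arcs from $v_0$ to its predators are excluded. There is no real obstacle beyond fixing this convention.
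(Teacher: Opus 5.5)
Your proposal is correct and matches the paper, which states this as an observation without proof because it follows by counting directly from the definition: $1 + 2(C-1) = 2C-1$ vertices, and one arc from each of the $2(C-1)$ non-root vertices to $v_0$. Your convention of excluding the arcs from $v_0$ to predators outside $S_v$ is also the one the paper uses for the quota widgets in \cref{obs:quota-widget-size}.
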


\begin{observation}
	\label{obs:selector-widget}
  Any viable set $P$ in $\Food$ that contains~$v_0$ contains at least~$C$
  vertices in~$S_v$.  Conversely, for any viable set $P$, there exists a viable
  set $P'$ that contains the same vertices as $P$ outside of $S_v$, as well as
  $v_0$ and $C - 1$ additional vertices in $S_v$.
\end{observation}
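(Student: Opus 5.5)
The observation follows directly from the definitions of \hsPDD viability and of the selector widget~$S_v$. Recall that in \hsPDD a non-source species~$x$ in a viable set must have at least $\frac12\deg^-(x)$ of its prey in the set. By construction, the in-neighbours of~$v_0$ are exactly the $2(C-1)$ leaves of~$S_v$. The only arcs leaving~$S_v$ go out of~$v_0$ (the dashed arcs in \cref{fig:selector-widget}), so no vertex outside~$S_v$ is a prey of~$v_0$. Every leaf of~$S_v$ has in-degree~$0$ and is therefore a source.

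\textbf{First claim.} Let $P$ be viable with $v_0 \in P$. Since $v_0$ is not a source, viability requires $P$ to contain at least $\frac12 \cdot 2(C-1) = C-1$ prey of~$v_0$. All of these prey are leaves of~$S_v$, so $|P \cap S_v| \ge (C-1) + 1 = C$.

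\textbf{Second claim.} Let $P$ be any viable set. Fix a set $L$ of exactly $C-1$ leaves of~$S_v$ and define
\[
P' := (P \setminus S_v) \cup \{v_0\} \cup L .
\]
By construction $P'$ agrees with~$P$ outside~$S_v$ and contains $v_0$ together with $C-1$ further vertices of~$S_v$. We check viability vertex by vertex.

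\begin{itemize}
\item The leaves in~$L$ are sources, so they impose no condition.
\item The vertex $v_0$ has exactly $C-1 = \frac12\deg^-(v_0)$ prey in~$P'$, so its condition holds.
\item Let $x \in P' \setminus S_v$. Its prey inside~$S_v$ can only be~$v_0$, because the leaves have no out-arcs leaving~$S_v$. Since $v_0 \in P'$, and $P'$ contains every element of~$P$ outside~$S_v$, we get $P \cap N^-(x) \subseteq P' \cap N^-(x)$. Hence the number of prey of~$x$ in~$P'$ is at least the number in~$P$, which was at least $\frac12\deg^-(x)$.
\end{itemize}

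Thus $P'$ is viable. This last monotonicity step, which uses that $v_0$ is the only vertex of~$S_v$ with arcs leaving the widget, is the only point needing care; the rest is routine.
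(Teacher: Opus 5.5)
Your proof is correct and is the direct verification the paper leaves implicit: the paper states this as an observation without proof, and your construction $P' = (P\setminus S_v)\cup\{v_0\}\cup L$ is the same strip-and-refill argument it uses for the second claim of \cref{lem:quota-widget}. Your monotonicity check for vertices outside $S_v$ (they can only gain the prey $v_0$) correctly handles a $P$ that does not already contain $v_0$; in the degenerate case $C=1$ (i.e., $m=0$), $v_0$ is a source, so your step ``since $v_0$ is not a source'' does not apply, but both claims then hold trivially.
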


Every quota widget $Q$ with root $x$ has a type $Q(\delta,\ell,r,M)$; $\delta$
is the number of prey of $x$ outside of $Q$, $r$ is a parameter that ensures
that any viable set that contains $x$ must also contain at least
$\max(\delta,r)$ prey of $x$, $\ell$ specifies the number of prey among them
that must be outside of $Q$, and the addition of each of a subset of ``medium''
prey of $x$ to a viable set requires the addition of $M$ vertices in $Q$ to
such a set. There are also ``big'' prey of $x$ in $Q$ that cannot be added to a
viable set of size at most $B$ that contains $x$. Their purpose is to increase
the number of prey of $x$ to force a certain number of prey of $x$ to be
contained in any viable set.  More precisely, a quota widget of type
$Q(\delta,\ell,r,M)$ consists of $x$, $\ell + \max(0,r - \delta)$ ``big''
vertices, $r - \ell$ ``medium'' vertices, and $\max(0,\delta - r)$ ``small''
vertices, all of which are prey of $x$. Small vertices have no prey of their
own. Medium vertices have $2(M - 1)$ prey, so adding such a vertex to a viable
set requires us to add at least $M - 1$ of its prey. Big vertices have $2(B-1)$
prey, so any viable set that contains $x$ and at least one such vertex would
exceed the budget $B$.  This definition is illustrated in \cref{fig:quota-widget}.

\begin{observation}
	\label{obs:quota-widget-size}
	A quota widget~$Q$ of type~$Q(\delta,\ell,r,M)$ contains
	\begin{equation*}
		|V(Q)| := (2B - 1)\cdot (\ell + \max(0,r-\delta)) + (2M-1) \cdot (r-\ell) + \max(0,\delta-r) + 1
	\end{equation*}
	vertices and $|V(Q)| -1$ arcs.
  Including the $\delta$ prey outside of $Q$, the root of~$Q$ has~$2 \cdot \max(\delta,r)$ prey.
\end{observation}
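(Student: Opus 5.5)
This is a direct count from the definition of a quota widget of type $Q(\delta,\ell,r,M)$: we add up the vertices contributed by each kind of prey of the root $x$, and then use that $Q$ is a tree.

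\emph{Vertices.} The widget consists of $x$ together with three kinds of prey of $x$.
\begin{itemize}
\item Each of the $\ell + \max(0,r-\delta)$ big vertices contributes itself plus its $2(B-1)$ prey, that is, $2B-1$ vertices.
\item Each of the $r-\ell$ medium vertices contributes itself plus its $2(M-1)$ prey, that is, $2M-1$ vertices.
\item Each of the $\max(0,\delta-r)$ small vertices contributes exactly one vertex.
\end{itemize}
Summing these contributions and adding $1$ for $x$ gives the stated formula for $|V(Q)|$. This relies on the prey of big and medium vertices being private leaves of the widget, i.e.\ not shared, and on every vertex of the widget being counted exactly once. Both hold because each widget is built as a tree with all arcs directed towards the root.

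\emph{Arcs.} Since $Q$ is a tree, it has exactly $|V(Q)|-1$ arcs. One can also see this directly: every non-root vertex has exactly one out-arc inside $Q$, namely to its unique predator in the widget.

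\emph{Prey of the root.} Inside $Q$, the root $x$ has
\[
\ell + \max(0,r-\delta) + (r-\ell) + \max(0,\delta-r) = r + \max(0,r-\delta) + \max(0,\delta-r)
\]
prey. We split into two cases.
\begin{itemize}
\item If $r \ge \delta$, this equals $2r-\delta$. Adding the $\delta$ outside prey gives $2r = 2\max(\delta,r)$.
\item If $r < \delta$, this equals $r + \delta - r = \delta$. Adding the $\delta$ outside prey gives $2\delta = 2\max(\delta,r)$.
\end{itemize}

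The only step needing any care is this case split on $\max$. Everything else is immediate from the definitions.
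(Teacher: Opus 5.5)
Your proof is correct. The paper states this as an observation and gives no proof; your direct count from the definition of a quota widget is exactly the verification it leaves implicit. That count consists of the per-type vertex contributions, $|V(Q)|-1$ arcs because $Q$ is a tree, and the case split on $\max(\delta,r)$ for the root's prey.
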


\begin{lemma}
	\label{lem:quota-widget}
  Let $Q$ be a quota widget in $\Food$ of type~$Q(\delta,\ell,r,M)$, and let $x$ be
  its root.  Every viable set $P$ in $\Food$ of size $|P| \le B$ that contains
  $x$ contains $\ell' \ge \ell$ of the prey of~$x$ outside of $Q$ and at least
  $1 + \max(\delta,r) - \ell' + (M - 1) \cdot \max(0,r - \ell')$ vertices in $Q$.
  Conversely, for any viable set $P$ that contains $\ell' \ge \ell$ of the
  prey of $x$ outside of $Q$, there exists a viable set $P'$ that contains
  the same vertices as $P$ outside of $Q$, as well as $x$ and exactly $\max(\delta,r)
  - \ell' + (M - 1) \cdot \max(0, r - \ell')$ additional vertices in $Q$.
\end{lemma}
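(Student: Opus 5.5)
The plan is to first pin down the viability requirement at $x$, and then count the vertices inside $Q$ by type.

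By \cref{obs:quota-widget-size}, $x$ has exactly $2\max(\delta,r)$ prey, and in \hsPDD each arc into $x$ has weight $2/\deg^-(x)$. So any viable $P$ containing $x$ must contain at least $\max(\delta,r)$ prey of $x$. The same reasoning applies to every non-root vertex of $Q$. A medium vertex has $2(M-1)$ prey, so it needs at least $M-1$ of them. A big vertex has $2(B-1)$ prey, so it needs at least $B-1$ of them. Small vertices and the leaves of $Q$ are sources, so they impose no requirement.

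For the first claim, let $P$ be viable with $|P|\le B$ and $x\in P$. The first step is to rule out big vertices. If a big vertex $b$ were in $P$, then $P\supseteq\{x,b\}$ together with $B-1$ prey of $b$, which gives $|P|\ge B+1$, a contradiction. So the prey of $x$ in $P$ are external (say $\ell'$ of them), medium, or small.

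Inside $Q$ there are $r-\ell$ medium and $\max(0,\delta-r)$ small prey. Hence $\ell'+(r-\ell)+\max(0,\delta-r)\ge\max(\delta,r)$, which simplifies to $\ell'\ge\ell$. This step needs a short case split on $\delta\ge r$ versus $\delta<r$, and it is the one step I would double-check carefully.

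Next comes the count. At least $\max(\delta,r)-\ell'$ prey of $x$ must come from inside $Q$. There are only $\max(0,\delta-r)$ small vertices available, so at least $\max(\delta,r)-\ell'-\max(0,\delta-r)=r-\ell'$ of these must be medium whenever this quantity is positive. Each such medium vertex forces $M-1$ of its own prey, and these prey sets are disjoint leaves of the tree $Q$. Adding $x$ itself gives the bound $1+\max(\delta,r)-\ell'+(M-1)\max(0,r-\ell')$.

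For the converse, I would take $P'$ to be $P$ restricted to outside $Q$, plus $x$, plus a greedy choice of prey inside $Q$. First take $\min(\max(0,\delta-r),\max(\delta,r)-\ell')$ small vertices. Then take $\max(0,r-\ell')$ medium vertices, which exist because $\ell'\ge\ell$ implies $r-\ell'\le r-\ell$. Finally take exactly $M-1$ prey of each chosen medium vertex. The number of added non-root vertices is exactly $\max(\delta,r)-\ell'+(M-1)\max(0,r-\ell')$; I would verify this by the same two-case split.

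To check that $P'$ is viable, observe the following:
\begin{itemize}
\item $x$ has $\ell'$ external prey plus the chosen internal prey, which together number at least $\max(\delta,r)$.
\item Each chosen medium vertex has exactly half of its prey.
\item All other chosen vertices in $Q$ are sources.
\item Vertices outside $Q$ keep exactly their prey from $P$, because the only vertex of $Q$ with neighbours outside $Q$ is $x$, and $x$ is only added. Adding $x$ can only help its predators, since viability is monotone in the set of prey present.
\end{itemize}

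The main obstacle is the bookkeeping with $\max(\cdot)$ in both directions. I would handle it by treating $\delta\ge r$ and $\delta<r$ separately throughout.
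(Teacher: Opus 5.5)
Your proposal is correct and follows the paper's proof essentially step for step. The budget rules out big vertices. Comparing the $\max(\delta,r)-\ell$ small and medium vertices with the $\max(\delta,r)$ required prey of $x$ gives $\ell'\ge\ell$, and it also forces at least $\max(0,r-\ell')$ medium vertices, each costing $M-1$ extra prey. For the converse you use the same greedy choice of $\min(\max(0,\delta-r),\max(\delta,r)-\ell')$ small and $\max(0,r-\ell')$ medium vertices. Your remark that adding $x$ can only help its predators is slightly more careful than the paper, which tacitly assumes $x\in P$ in the converse.
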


\begin{proof}
	We prove the second claim first. Let~$P$ be a
	viable set that contains~$x$ and~$\ell'\ge\ell$ prey of $x$ outside of
	$Q$.  As $x$ has $2\cdot \max(\delta,r)$ prey overall, $P$ contains at
	least~$t:=\max(\delta,r)$ of its prey.  To construct $P'$, we remove all vertices
	in $V(Q)$ from $P$.  Then, we add $x$, $\min(s, t - \ell')$ of the $s =
	\max(0, \delta - r)$ small vertices and $\max(0,r - \ell')$ medium vertices
	to~$P'$, together with~$M-1$ of the prey of each added medium vertex.  Since
	every vertex outside of $Q$ has the same prey in $P'$ as in $P$ and it is easily
	checked that every vertex in $P' \cap V(Q)$ has at least half its prey in
	$P'$, $P'$ is viable.  The total number of vertices in $P' \cap V(Q)$,
	including $x$, is $1 + \min(s,t - \ell') + M \cdot \max(0,r - \ell') = 1 +
	\max(\delta,r) - \ell' + (M - 1) \cdot \max(0,r - \ell')$.
	
	To prove the first claim, note that no viable set $P$ of size at most $B$
	that contains $x$ contains any big vertices because including such a vertex
	would require at least $B - 1$ of its prey to be included as well, that is,
	$P$ would contain at least $B + 1$ vertices. Since there are only
	$\max(\delta,r) - \ell$ small and medium vertices in~$Q$, and $P$ must contain
	$t = \max(\delta,r)$ prey of $x$, it must contain $\ell' \ge \ell$ prey of
	$x$ outside of $Q$ and $\max(\delta,r) - \ell'$ small and medium vertices in
	$Q$. As there are only $s = \max(0,\delta-r)$ small vertices, $P$ contains at
	least $\max(0,t - \ell' - s) = \max(0, r - \ell')$ medium vertices.  For each
	medium vertex, at least $M - 1$ of its prey have to be in~$P$. That is, $P$
	contains at least $1 + t - \ell' + (M - 1) \cdot \max(0, r - \ell') = 1 +
	\max(\delta,r) - \ell' + (M - 1) \cdot \max(0, r - \ell')$ vertices in~$Q$.
\end{proof}

\begin{figure}
	\centering
	\begin{tikzpicture}[x=9mm,y=9mm]
		\small
		\path
		foreach \t in {b,m} {
			foreach \i/\s in {1/0.5,2/1.5,3/0.5} {
				node [chosen] (\t\i1) {}
				++(0:0.25) node [chosen] (\t\i2) {}
				++(0:0.4) node [dot] {}
				++(0:0.15) node [dot] {}
				++(0:0.15) node [dot] {}
				++(0:0.4) node [chosen] (\t\i3) {}
				++(0:\s)
			}
		}
		foreach \t in {b,m} {
			foreach \i in {1,2,3} {
				($(\t\i1)!0.5!(\t\i3)$) +(270:1) node [chosen] (\t\i) {}
			}
		}
		(m3 -| m33) ++(0:0.75) node [chosen] (s1) {}
		++(0:0.25) node [chosen] (s2) {}
		++(0:0.4) node [dot] {}
		++(0:0.15) node [dot] {}
		++(0:0.15) node [dot] {}
		++(0:0.4) node [chosen] (s3) {}
		($(b11)!0.5!(s3)$) +(270:2) node [chosen] (r) {}
		($(b2)!0.5!(b3)$) node [dot] {}
		+(180:0.15) node [dot] {}
		+(0:0.15) node [dot] {}
		($(m2)!0.5!(m3)$) node [dot] {}
		+(180:0.15) node [dot] {}
		+(0:0.15) node [dot] {}
		(r) +(-13pt,21pt) node (lr) {$x$};
		\draw [pin edge] ([xshift=3pt]lr.south) -- (r);
		\path [edge,->]
		foreach \t in {b,m} {
			foreach \i in {1,2,3} {
				(\t\i1) edge [bend right] (\t\i)
				(\t\i2) edge [bend right=10] (\t\i)
				(\t\i3) edge [bend left] (\t\i)
			}
		}
		(b1) edge [in=180,out=270,out distance=10mm] (r)
		(b2) edge [in=157.5,out=300,out distance=10mm] (r)
		(b3) edge [in=135,out=330] (r)
		(m1) edge [in=112.5,out=250] (r)
		(m2) edge [in=90,out=190,in distance=10mm] (r)
		(m3) edge [in=67.5,out=200,in distance=12mm] (r)
		(s1) edge [in=45,out=240,in distance=15mm,out distance=9mm] (r)
		(s2) edge [in=22.5,out=250,out distance=10mm] (r)
		(s3) edge [in=0,out=270,out distance=10mm] (r);
		\path [placeholder,<-]
		(r) edge +(216:1) edge +(234:1) edge +(252:1);
		\path [placeholder,->]
		(r) edge +(288:1) edge +(306:1) edge +(324:1);
		\foreach \t/\count in {b/2(B-1),m/2(M-1)} {
			\foreach \i in {1,2,3} {
				\path
				(\t\i1) +(-1.5pt,3pt) coordinate (w)
				(\t\i3) +(1.5pt,3pt) coordinate (e)
				($(w)!0.5!(e)$) +(90:6pt) node [anchor=south] (\t-count-\i) {$\count$};
				\draw [decorate,decoration={brace,amplitude=6pt}] (w) -- (e);
			}
		}
		\path
		(s1) +(-1.5pt,3pt) coordinate (w)
		(s3) +(1.5pt,3pt) coordinate (e)
		($(w)!0.5!(e)$) +(90:6pt) node [anchor=south] (s-count) {\shortstack{\underline{Small}\\[3pt]$\max(0,\delta - r)$}};
		\draw [decorate,decoration={brace,amplitude=6pt}] (w) -- (e);
		\path
		(b-count-1.north -| b11) +(-1.5pt,0pt) coordinate (w)
		(b-count-3.north -| b33) +(1.5pt,0pt) coordinate (e)
		($(w)!0.5!(e)$) +(90:6pt) node [anchor=south] (b-count) {\shortstack{\underline{Big}\\[3pt]$\ell + \max(0,r - \delta)$}};
		\draw [decorate,decoration={brace,amplitude=6pt}] (w) -- (e);
		\path
		(m-count-1.north -| m11) +(-1.5pt,0pt) coordinate (w)
		(m-count-3.north -| m33) +(1.5pt,0pt) coordinate (e)
		($(w)!0.5!(e)$) +(90:6pt) node [anchor=south] (b-count) {\shortstack{\underline{Medium}\\[3pt]$r - \ell$}};
		\draw [decorate,decoration={brace,amplitude=6pt}] (w) -- (e);
		\begin{scope}[on background layer]
			\path (b-count.north) +(90:3pt) coordinate (top);
			\node [widget,fit={(top) (b11) (s-count) (r)}] {};
		\end{scope}
	\end{tikzpicture}
	\caption{A quota widget $Q$ of type~$Q(\delta,\ell,r,M)$ and with root $x$.
		Dashed arcs indicate connections to neighbours of $x$ outside of $Q$.  There
		are no other arcs to or from vertices outside of $Q$.
	}
	\label{fig:quota-widget}
\end{figure}
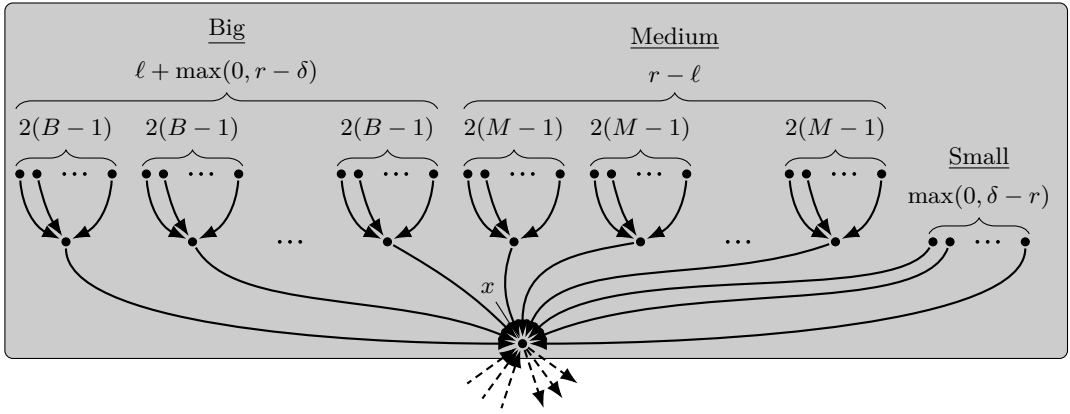

To finish defining the \hsPDD instance $\Instance' = (\Food,d,B,D)$, we need
to specify the types of the quota widgets in \Food, as well as the diversity
function $d$ and the diversity threshold~$D$.  The budget $B$ is the value
defined earlier. We define $d(v_0) = D_0$ and $d(v_1) = d(v_2) = D_1$, for all
$v \in G$. Every other vertex $x \in \Food$ has diversity $d(x) = 1$. The
diversity threshold~$D$ is~$2nD_1 + kD_0$.  Each quota widget $Q_{u,v}$ is of
type $Q(1,1,2,C)$. Each quota widget $Q_v^1$ is of type
$Q(\deg_G(v),0,c(v),C+1)$.  Each quota widget $Q_v^2$ is of type
$Q(\deg_G(v)+1,1,1,1)$.

The construction of $\Instance'$ from $\Instance$ takes polynomial time, as
every vertex and edge of $G$ is replaced by a subgraph of polynomial size, with
appropriate edges between these subgraphs as determined only by the incidence
of vertices and edges in $G$.  Thus, to finish the proof of \cref{thm:tw-hard},
we need to prove that $\tw(\Food) \in O(\tw(G))$ and that $\Instance'$ is a
yes-instance of \hsPDD if and only if \Instance is a yes-instance of \CDS.

\begin{lemma}
	\label{lem:f-tw}
	$\tw(\Food) \le 3\tw(G) + 2$.
\end{lemma}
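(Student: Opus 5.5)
Take an optimal tree decomposition $(T,\{Q_t\})$ of $G$ of width $\tw(G)$ and turn it into a decomposition of $\Food$. The natural move is to replace each vertex $v$ of $G$ by its three widget roots $v_0,v_1,v_2$. Each edge-widget root $e_{u,v}$ is then hung on a bag containing both $u$ and $v$, and the tree-shaped widgets are attached as pendant pieces.

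\textbf{Step 1: the core graph.} Let $\Food'$ be the subgraph of $\Food$ induced by the roots $\{v_0,v_1,v_2 \mid v\in V(G)\}\cup\{e_{u,v},e_{v,u}\mid \{u,v\}\in E(G)\}$. For each bag $Q_t$ define $Q'_t := \{v_0,v_1,v_2 \mid v\in Q_t\}$, which has size at most $3(\tw(G)+1)$.

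\textbf{Step 2: check the bags.} For each vertex $v$, the nodes containing $v_i$ are exactly the nodes containing $v$, so they form a subtree. The arc $(v_0,v_2)$ lies in any bag containing $v$. Each arc of $\Food'$ that does not touch an edge vertex is covered.

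\textbf{Step 3: add the edge vertices.} For each edge $\{u,v\}$ fix a node $t$ with $u,v\in Q_t$. Attach two new leaf nodes to $t$: one with bag $\{u_0,u_1,v_2,e_{u,v}\}$ and one with bag $\{v_0,v_1,u_2,e_{v,u}\}$. These cover the arcs $(u_0,e_{u,v})$, $(e_{u,v},u_1)$, $(e_{u,v},v_2)$ and symmetrically. (If I read the construction right, the arc into $v_2$ comes from $e_{u,v}$, per the figure; either way all endpoints lie among $u_0,u_1,u_2,v_0,v_1,v_2$.) Each such bag has size $4\le 3\tw(G)+3$ when $\tw(G)\ge 1$. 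If $G$ has no edges, the edge case is vacuous. Connectivity still holds: each $u_i$ in the new leaf also appears in the parent bag $Q'_t$, and $e_{u,v}$ appears only in that leaf.

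\textbf{Step 4: attach the widgets.} Each widget is a tree whose only vertex with outside neighbours is its root $x$. For every edge $\{y,z\}$ inside the widget, attach a node with bag $\{y,z\}$ as a chain or subtree hanging off some node whose bag contains $x$, mirroring the tree structure of the widget. This is the standard fact that gluing a tree onto a single vertex does not raise treewidth above $\max(1,\cdot)$; the added bags have size 2.

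The overall width is then $\max(3\tw(G)+2,\,3,\,1)$, which equals $3\tw(G)+2$ whenever $\tw(G)\ge1$. If $\tw(G)=0$, then $G$ has no edges and only Steps 2 and 4 apply, giving width at most $2$.

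The only delicate point is the bookkeeping in Step 3, namely checking that the edge-vertex bags keep the subtree property. This holds because $e_{u,v}$ occurs in a single bag and its neighbours are all present in the adjacent bag.
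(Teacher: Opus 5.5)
Your proof is correct, and you read the arcs correctly. The construction adds $(e_{v,u},u_2)$ for every edge, so by symmetry $e_{u,v}$ is adjacent, outside its own widget, exactly to $u_0$, $u_1$ and $v_2$. Your leaf bags $\{u_0,u_1,v_2,e_{u,v}\}$ and $\{v_0,v_1,u_2,e_{v,u}\}$ therefore cover every arc between widget roots.

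Your route differs from the paper's mainly in how the edge vertices are handled. The paper works with a supergraph of $U(\Food)$ in three steps:
\begin{itemize}
\item it subdivides every edge of $G$, citing that subdivision does not increase treewidth;
\item it applies a $\phi$-expansion with $\phi\le 3$, which replaces each vertex $v$ by the triangle $\{v_0,v_1,v_2\}$ and each subdivision vertex by the pair $\{e_{u,v},e_{v,u}\}$, fully joined to both adjacent triples;
\item it uses that attaching trees does not raise a treewidth that is at least $1$.
\end{itemize}

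You skip the subdivision entirely. You triple the bags of $G$'s decomposition, which is exactly the argument behind the paper's $\phi$-expansion bound. You then hang each edge vertex on its own leaf bag of size $4$, next to a node whose bag contains $u$ and $v$. This works because all of its non-widget neighbours already lie in that tripled bag.

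The paper's version is shorter and modular, reusing general facts about subdivisions, expansions and pendant trees. Yours is self-contained and checks the tree-decomposition axioms directly. In exchange it needs a small case split: the size-$4$ bags require $\tw(G)\ge 1$, and $\tw(G)=0$ means there are no edges. You handle both cases correctly.

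In Step~4, the phrase ``mirroring the tree structure'' is what makes the argument work. For a medium or big vertex $m$, the bags $\{m,z\}$ for its prey $z$ must hang off the bag $\{x,m\}$, not off an arbitrary node containing $x$. That is what you describe, so there is no gap.
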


\begin{proof}
	We consider the underlying undirected graph $U(\Food)$ of \Food, as arc
	directions are unimportant for treewidth.
	
	Given a function $\phi : V(G) \to \NN_{\ge 1}$, a \emph{$\phi$-expansion}
	of $G$ is a graph $G'$ in which each vertex $v$ of $G$ is replaced by a
	clique $C_v$ of size $\phi(v)$ and, for every edge $\{u,v\}$ in $G$, there
	is an edge between every vertex in $C_u$ and every vertex in $C_v$.  Let $a =
	\max_{v \in G} \phi(v)$, then $\tw(G') \le a(\tw(G) + 1) - 1$: Given a tree
	decomposition $(T,\B)$ of width $\tw(G)$ for $G$, the tree decomposition
	$(T,\B')$ in which every bag $B_x \in \B$ is replaced by the bag $B'_x =
	\bigcup_{v \in B_x} C_v$ is a tree decomposition of $G'$ of the desired width.
	
	We can obtain a supergraph $G'$ of $U(\Food)$ from $G$ by (1) subdividing
	each edge $\{u,v\}$ in~$G$ using a vertex $e_{\{u,v\}}$; (2) applying a
	$\phi$-expansion that replaces each vertex $e_{\{u,v\}}$ with two vertices
	$e_{u,v}$ and $e_{v,u}$, and each vertex $v \in V(G)$ with three vertices
	$v_0$, $v_1$, and $v_2$; and (3) attaching to each vertex in the resulting
	graph the respective selector or quota widget of which it is the root.
	Removing edges does not increase a graph's treewidth, so $\tw(\Food) =
	\tw(U(\Food)) \le \tw(G')$.  Similarly, subdividing edges does not increase
	a graph's treewidth~\cite{cygan}, so the graph obtained after Step (1) has
	treewidth $\tw(G)$.  Thus, by the discussion in the previous paragraph,
	the graph obtained after Step (2) has treewidth at most $3\tw(G) + 2$.
	Attaching trees in Step (3) does not increase the treewidth of a graph
	of treewidth at least $1$.  Thus, $\tw(G') \le 3\tw(G) + 2$.
\end{proof}

\begin{lemma}
	\label{lem:valid-reduction}
	\Instance is a \yes-instance of \CDS if and only if $\Instance'$ is a
	\yes-instance of \hsPDD.
\end{lemma}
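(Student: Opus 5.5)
We prove both directions by comparing the exact cost that \cref{lem:quota-widget} and \cref{obs:selector-widget} assign to each widget against the budget $B$. One preliminary normalisation is needed. We assume without loss of generality that $c(v) \le \deg_G(v)$ for all $v$, since larger capacities can never be used. We may also assume the dominating set in a \CDS solution has size exactly $k$ (if $k \le n$): adding vertices to $S$ keeps $f$ valid on the remaining vertices. With $c(v)\le\deg_G(v)$ we have $\max(\deg_G(v),c(v)) = \deg_G(v)$ and $\max(\deg_G(v)+1,1) = \deg_G(v)+1$, so the widget costs become linear expressions.

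\emph{($\Rightarrow$)} Given $(S,f)$ with $|S|=k$, let $P$ contain the following vertices:
\begin{itemize}
\item $v_0$ together with $C-1$ further vertices of $S_v$, for each $v\in S$;
\item $e_{u,v}$ for each $v\notin S$ with $f(v)=u$;
\item all $v_1,v_2$;
\item the inner vertices prescribed by the converse parts of \cref{obs:selector-widget} and \cref{lem:quota-widget}.
\end{itemize}
Viability at the widget roots holds for the following reasons:
\begin{itemize}
\item $Q_{u,v}$ has the required outside prey $u_0$ ($\ell=1$).
\item $Q_v^2$ has one outside prey, namely $v_0$ if $v\in S$ and $e_{f(v),v}$ otherwise, since $(e_{u,v},v_2)$ is an arc.
\item $Q_u^1$ requires $\ell=0$, and its number of outside prey in $P$ is $\ell'_u = |f^{-1}(u)| \le c(u)$.
\end{itemize}
Summing the costs gives the following contributions:
\begin{itemize}
\item $kC$ for the selector widgets;
\item $\sum_v(\deg(v)+1)$ for the widgets $Q_v^2$;
\item $(n-k)(C+1)$ for the widgets $Q_{u,v}$, each costing $1+2-1+(C-1)$;
\item $\sum_u\bigl(1+\deg(u)-\ell'_u + C(c(u)-\ell'_u)\bigr)$ for the widgets $Q_u^1$.
\end{itemize}
Since $\sum_u \ell'_u = n-k$, the terms $-(C+1)\ell'_u$ cancel the $Q_{u,v}$ costs exactly. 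The total is then $Ck + C\sum c(v) + 2n + 4m = B$. The diversity is $2nD_1 + kD_0 = D$.

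\emph{($\Leftarrow$)} Let $P$ be a solution. First we force its shape using the diversity scores. All vertices other than the $v_1,v_2$ contribute at most $nD_0 + B < D_1$. Hence all $2n$ vertices $v_1,v_2$ lie in $P$, for otherwise $d(P) < D$. The vertices other than the $v_i$ contribute at most $B < D_0$, so $|P\cap X| \ge k$. Let $S := \{v : v_0\in P\}$. For $v\notin S$, viability of $v_2$ via \cref{lem:quota-widget} (with $\ell=1$) forces some outside prey $e_{u,v}\in P$, and viability of $e_{u,v}$ (again $\ell=1$) forces $u_0\in P$, so $u\in S$. Define $f(v)$ to be such a $u$.

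It remains to show $|S|\le k$ and $|f^{-1}(u)|\le c(u)$. For this we use the lower bounds of \cref{lem:quota-widget} and \cref{obs:selector-widget}, applied to every root in $P$. Let $p_u$ be the number of vertices $e_{u,w}$ in $P$. The cost of $Q_u^1$ together with all $Q_{u,w}$ for $e_{u,w}\in P$ is at least
\[
\deg(u)+1 + C\,c(u) + \max(0,\,p_u-c(u))\cdot C,
\]
because beyond the capacity each extra edge vertex saves only $1$ in $Q_u^1$ but costs $C+1$. Summing all lower bounds yields $|P| \ge B + C(|S|-k) + C\sum_u \max(0,p_u-c(u))$. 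Hence $|P|\le B$ forces $|S|\le k$ and $p_u \le c(u)$, so $|f^{-1}(u)|\le p_u\le c(u)$.

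The main obstacle is this careful lower-bound accounting. We must make sure that each $Q_v^2$ costs at least $\deg(v)+1$ regardless of how many outside prey it uses. Additional outside prey do reduce its cost, but each such prey is a vertex $e_{\cdot,v}$ or $v_0$ that is paid for elsewhere at cost at least $C+1$ or $C$. Any such surplus therefore leaves a net loss of order $C$. The slack is at most $O(m) < C = 3m+1$, so these terms cannot be absorbed, and this is exactly where the choice of $C$ enters.
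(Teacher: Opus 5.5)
Your route is the same as the paper's. In the forward direction you build the minimal viable set widget by widget and check that its size is exactly $B$. In the backward direction the diversity scores force all $v_1,v_2$ and at least $k$ of the $v_0$ into $P$, and you then sum the widget lower bounds from \cref{lem:quota-widget} and compare with $B$.

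The step that needs repair is the displayed bound $|P| \ge B + C(|S|-k) + C\sum_u \max(0,p_u-c(u))$, which is false as written. By \cref{lem:quota-widget}, $Q_v^2$ only has to contain $\deg_G(v)+2-\ell'_v$ vertices of $P$, where $\ell'_v \ge 1$ is the number of prey of $v_2$ outside $Q_v^2$ that lie in $P$. Your explanation that each additional such prey causes a ``net loss of order $C$'' does not hold. An extra edge vertex $e_{u,v}$ with $p_u \le c(u)$ is cost-neutral: the $C+1$ vertices it needs in $Q_{u,v}$ are exactly refunded in $Q_u^1$, as your own forward computation shows. So such a vertex yields a net \emph{saving} of $1$. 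Concretely, take the example of \cref{fig:reduction}. Add $e_{a,b}$ together with its $C$ required vertices of $Q_{a,b}$. Then drop the $C+1$ vertices of $Q_a^1$ and the one small vertex of $Q_b^2$ that are no longer needed. The result is a viable set of diversity at least $D$ and size $B-1$, although $|S|=k$ and all $p_u\le c(u)$.

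The correct repair is the aggregate bound you allude to at the end. The inequality holds once you add the term $-\sum_v(\ell'_v-1)$. Since $\ell'_v \le \deg_G(v)+1$, this term is at least $-2m$. So $|P|\le B$ gives $C(|S|-k) + C\sum_u \max(0,p_u-c(u)) \le 2m < C$. As $|S| \ge k$, the left-hand side is a nonnegative multiple of $C$, hence $0$. This yields $|S| = k$ and $p_u \le c(u)$ for all $u$.

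The paper's own proof has the same slip: it asserts at least $1+\deg_G(v)$ vertices in every $Q_v^2$ and never uses $C > 2m$. With the correction above, your argument is more complete than the paper's and explains why $C = 3m+1$ is chosen.
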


\begin{proof}
  First, assume that $\Instance$ is a yes-instance of \CDS, and let
  $(S,f)$ be a solution of \CDS on~$\Instance$.  We can assume that $c(v) \le
  \deg_G(v)$, for all~$v \in G$, because $|f^{-1}(v)| \le \deg_G(v)$, for all~$v \in G$.
  We can also assume that $|S| = k$; otherwise, $(S',f|_{V \setminus S'})$,
  for any superset $S' \supsetneq S$ of size $|S'| = k$, is also a feasible
  solution of \CDS on $\Instance$.
  To define a solution $P$ of \hsPDD on~$\Instance'$, we first define a
  viable superset~$P' \supseteq P$ containing the following vertices:
  \begin{itemize}
    \item For each $v \in S$, all vertices in $S_v$.
    \item For each $v \notin S$, all vertices in $Q_{f(v),v}$.
    \item For each $v \in V$, all vertices in $Q_v^1$ and $Q_v^2$.
  \end{itemize}
  To verify that this set is viable, we only need to verify that every root
  of a quota widget in $P'$ has the requisite number of external prey in $P'$,
  as all vertices in the quota widget belong to $P'$.  For every vertex $v_1
  \in P'$, this is trivially true because $Q_v^1$ is of type
  $Q(\deg_G(v),0,c(v),C+1)$, that is, no external prey of $v_1$ are required to
  be included in any viable set.  If $e_{u,v} \in P'$, for some edge
  $\{u,v\}$, then $u = f(v)$, so $u \in S$, that is, $u_0 \in P'$.  Since $u_0$
  is a prey of~$e_{u,v}$ this shows that $e_{u,v}$ has the one external prey in
  $P'$ required by the type~$Q(1,1,2,C)$ of~$Q_{u,v}$.  Finally, every vertex
  $v \in V$ is either in $S$, in which case $v_0 \in P'$, or not, in which
  case~$e_{f(v),v} \in P'$.  Since both are prey of $v_2$, this shows that $v_2$ has
  at least one external prey in~$P'$, as required by the type
  $Q(\deg_G(v)+1,1,1,1)$ of $Q_v^2$.

  For all $v \in V(G)$, let $k_v$ be the number of vertices that $v$ is chosen
  to dominate, that is, $k_v = 0$ if $v \notin S$, and $k_v = |f^{-1}(v)|$ if
  $v \in S$. Then $k_v \le c(v)$, because~$(S,f)$ is a solution of \CDS
  on~\Instance, and~$k_v$ is the number of vertices $e_{u,w} \in P'$ that
  satisfy $u = v$.  In particular, $\sum_{v \in V} k_v = n - k$.  By
  \cref{obs:selector-widget,lem:quota-widget}, there is a \hviable set $P$ in
  \Food that includes the same roots of selector and quota widgets as~$P'$ and
  contains in total
  \begin{itemize}
    \item $C$ vertices from each selector widget $S_v$ such that $v_0 \in P'$,
    \item $C+1$ vertices from each quota widget $Q_{u,v}$ such that $e_{u,v} \in P'$,
    \item $1 + \deg_G(v) + C \cdot c(v) - (C + 1) \cdot k_v$ vertices from each quota widget $Q_v^1$, and
    \item $1 + \deg_G(v)$ vertices from each quota widget $Q_v^2$.
  \end{itemize}
  Thus,
	\[
		\begin{aligned}
			|P|
			&\le C \cdot k + (C + 1)(n - k) + \sum_{v \in V} (2 + 2\deg(v) + C \cdot c(v) - (C + 1)k_v))\\
			&= C \cdot k + 4m + 2n + C \cdot \sum_{v \in V}c(v) = B.
		\end{aligned}
  \]
	Since $P$ contains $v_1$ and $v_2$, for all $v \in V$, and $v_0$, for all $v
	\in S$, the diversity of $P$ is at least~$2n \cdot D_1 + k\cdot D_0 = D$.  Thus,
	$\Instance'$ is a \yes-instance of \hsPDD.
	
  Now, assume that $\Instance'$ is a \yes-instance of \hsPDD, with
  solution~$P$.  A~simple calculation shows that~\Food has at most
  $D_0$ vertices.  Further, \Food has $n$ vertices of diversity $D_0$ and $2n$
  vertices of diversity $D_1$. Thus, if $P$ contained $q < 2n$ of the vertices
  in $X = \{ v_1, v_2 \mid v \in V \}$, then $d(P) < q\cdot D_1 + (n + 1)\cdot
  D_0 \le 2D_1n \le D$.  Consequently, $P$ must contain all vertices in~$X$.
  Analogously, $P$ contains $n_Y \ge k$ of the vertices in $Y = \{ v_0 \mid v
  \in V \}$.
	
	For each $v \in V$, let $k_v$ be the number of vertices $e_{u,w} \in P$
  that satisfy $u = v$.  Then, by \cref{obs:selector-widget,lem:quota-widget},
  $P$ contains at least
  \begin{itemize}
    \item $C$ vertices from each selector widget $S_v$ such that $v_0 \in P$,
    \item $C+1$ vertices from each quota widget $Q_{u,v}$ such that $e_{u,v} \in P$,
    \item $1 + \deg_G(v) + C \cdot \max(c(v),k_v) - (C + 1) \cdot k_v$ vertices from each quota widget $Q_v^1$, and
    \item $1 + \deg_G(v)$ vertices from each quota widget $Q_v^2$.
  \end{itemize}
  Thus,
  \[
    \begin{aligned}
      |P|
      &\ge Cn_Y + \sum_{v \in V}(C + 1)k_v + \sum_{v \in V}(2 + 2\deg_G(v) + C \cdot \max(c(v),k_v) - (C+1) \cdot k_v)\\
      &= C\Bigl(n_Y + \sum_{v \in V}\max(c(v),k_v)\Bigr) + 2n + 4m.
    \end{aligned}
  \]
  Since $|P| \le B = C \cdot k + 4m + 2n + C \cdot \sum_{v \in V} c(v)$, this implies
  that $n_Y \le k$.  We already observed that $n_Y \ge k$, so $n_Y = k$ and,
  therefore, $\sum_{v \in V} \max(c(v),k_v) \le \sum_{v \in V} c(v)$, which
  implies that $k_v \le c(v)$, for all $v \in V$.

  Now, let $S = \{v \in V \mid v_0 \in P\}$.  Then $|S| = n_Y = k$. For each
  vertex $v \notin S$, $v_0 \notin P$ but $v_2 \in P$. By
  \cref{lem:quota-widget}, $P$ contains at least one prey of~$v_2$ not
  in $Q_v^2$. Therefore, since~$v_0 \notin P$, there is at least one
  edge $\{u,v\} \in E(G)$ such that $e_{u,v} \in P$.  By
  \cref{lem:quota-widget} again, this implies that $u_0 \in P$ and, thus, that
  $u \in S$. Thus, we can set $f(v) = u$.  (If there are multiple
  vertices $u$ that meet this condition, we can choose $f(v)$ to be any one of
  them.)
  Since there are $k_u \le c(u)$ vertices $e_{u,v} \in P$, for all $u \in S$,
  this construction of $f$ ensures that $|f^{-1}(u)| \le k_u \le c(u)$, for all
  $u \in S$.  Thus, $(S,f)$ is a feasible solution of \CDS on $\Instance$.
\end{proof}

We have shown that we can, in polynomial time, construct an instance $\Instance'
= (\Food,d,B,D)$ of \hsPDD from an instance $\Instance = (G,k,c)$ of \CDS such that
$\Instance'$ is a \yes-instance if and only if $\Instance$ is and $\tw(\Food)
\in \Oh(\tw(G))$.  Since \CDS is $W[\ell]$-hard, for all $\ell \ge 1$, this shows
that the same is true for \hsPDD, and thereby proves \cref{thm:tw-hard}.

\section{\boldmath \wPDD Parameterized by Node Scanwidth}

In this section, we prove \cref{thm:nsw-fpt}. Let $\Instance = (\Food,
d, B, D)$ be an instance of \wPDD, and let $T$ be a tree extension of $\Food$
of node scanwidth $\nsw(\Food)$.  Let $Z_v$ be the set of all descendants of $v$
in $T$, including~$v$ itself.  We associate with every vertex $v \in T$ a set
$A_v$ of all proper ancestors $u$ of $v$ in $T$ that, in $\Food$, have an arc to
a vertex in $Z_v$.\footnote{$T$ together with the collection
of bags $\B = \{A_v \cup \{v\} \mid v \in T\}$ forms a tree decomposition
$\T = (T,\B)$ of~$\Food$ of width $\nsw(\Food)$~\cite{bruchholdThesis}. We do not use this here.}
These sets satisfy $|A_v| \le \nsw(\Food)$, for all $v \in T$.
For any subset $A' \subseteq A_v$ and any $\ell \in [B]_0$, a set $P \subseteq Z_v$ is
\emph{$(v,A',\ell)$-compatible} if
\begin{enumerate}
	\item $|P| = \ell$, and
	\item Every vertex $w \in P$ is a source in $\Food$ or satisfies
		$\gamma_w(P \cup A') \ge 1$.
\end{enumerate}
In other words, $P \cup A'$ is viable, except that vertices in $A'$ may not have
sufficient prey in $A'$.

We use dynamic programming over $T$ to solve \wPDD on \Instance.  The algorithm
constructs a table $S$ indexed by triples $(v,A',\ell)$ with $v \in T$,
$A' \subseteq A_v$, and $\ell \in [B]_0$.  The entry $S[v,A',\ell]$ stores the
maximum diversity of all $(v,A',\ell)$-compatible sets $P \subseteq Z_v$.
If there is no such set, then $S[v,A',\ell] = -\infty$.

For the root $\rho$ of $T$, we have $A_\rho = \emptyset$ and $Z_\rho = V(\Food)$.
Thus, we immediately obtain the following lemma:

\begin{lemma}
	\label{lem:root-solution}
  The maximum diversity of any viable set $P$ of size at most $B$ in $\Food$
  is~$S[\rho,\emptyset,B]$.  In particular, $\Instance$ is a \yes-instance if
  and only if~$S[\rho,\emptyset,B] \ge D$.
\end{lemma}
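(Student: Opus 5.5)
The plan is to unfold the definition of $(\rho,\emptyset,\ell)$-compatibility at the root and observe that it coincides with $\gamma$-viability. First I would check the two structural facts used in the statement. Every vertex of $\Food$ is a descendant of $\rho$ in $T$, so $Z_\rho = V(\Food)$. The root has no proper ancestors, so $A_\rho = \emptyset$. Consequently, the only choice for $A'$ at the root is $A' = \emptyset$, and a set $P \subseteq Z_\rho = V(\Food)$ is $(\rho,\emptyset,\ell)$-compatible exactly when $|P| = \ell$ and every non-source $w \in P$ satisfies $\gamma_w(P \cup \emptyset) = \gamma_w(P) \ge 1$.

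That second condition is verbatim the definition of $P$ being $\gamma$-viable. Hence the $(\rho,\emptyset,\ell)$-compatible sets are precisely the viable sets of size exactly $\ell$. By the definition of the table, $S[\rho,\emptyset,\ell]$ is therefore the maximum diversity of a viable set of size exactly $\ell$, and it equals $-\infty$ if no such set exists.

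The one point needing care, and the main obstacle, is the mismatch between \emph{exactly} $\ell$ in the compatibility definition and \emph{at most} $B$ in the lemma and in \wPDD. I would resolve it by taking the maximum over all sizes, which requires no extra assumptions: the maximum diversity of a viable set of size at most $B$ is $\max_{0 \le \ell \le B} S[\rho,\emptyset,\ell]$. These are $B+1$ table lookups, which does not affect the running time in \cref{thm:nsw-fpt}. The lemma's quantity $S[\rho,\emptyset,B]$ should be read in this sense. Equivalently, one can redefine condition~1 as $|P| \le \ell$, in which case $S[\rho,\emptyset,\cdot]$ is monotone in $\ell$ and the identity holds literally.

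A padding argument cannot replace this. Enlarging a viable set of size $< B$ to one of size exactly $B$ need not be possible, because some vertices may never occur in any viable set, for example when the total $\gamma$-weight of their incoming arcs is below $1$. Finally, since $\emptyset$ is viable, the maximum is always finite. The ``in particular'' part then follows immediately: $\Instance$ is a \yes-instance iff some viable $P$ with $|P| \le B$ has $d(P) \ge D$, iff this maximum is at least $D$.
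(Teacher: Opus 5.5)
Your proof is correct, and it departs from the paper exactly at the point you flag. The paper also identifies $(\rho,\emptyset,\ell)$-compatible sets with viable sets of size exactly $\ell$, using $A_\rho=\emptyset$ and $Z_\rho=V(\Food)$. It then bridges the gap between ``exactly $B$'' and ``at most $B$'' by padding. An optimal viable $P^*$ with $|P^*|<B$ is enlarged by repeatedly adding a vertex all of whose prey already lie in $P^*$; a source of the subgraph induced by $V(\Food)\setminus P^*$ always qualifies. The paper asserts that this keeps $P^*$ viable, so the single entry $S[\rho,\emptyset,B]$ suffices.

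Your objection is well founded. Adding such a vertex $v$ preserves viability only if the total weight of all arcs entering $v$ is at least $1$, and $\gamma:A\to(0,1]$ does not guarantee this. Take two vertices $a,b$, one arc $(a,b)$ with $\gamma(a,b)=1/2$, $B=2$ and $D=d(a)$. The set $\{a\}$ certifies a yes-instance, but no viable set of size exactly $2$ exists, so $S[\rho,\emptyset,2]=-\infty$. The lemma as literally stated therefore fails for general \wPDD.

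The paper's argument is sound under two unstated assumptions:
\begin{itemize}
\item every non-source vertex has total in-weight at least $1$, which holds for \asPDD, where this total is $1/\alpha$;
\item $B\le|V(\Food)|$, which the running-time analysis assumes anyway.
\end{itemize}
Your route needs neither assumption: take $\max_{0\le\ell\le B}S[\rho,\emptyset,\ell]$, or relax condition~1 to $|P|\le\ell$ (this changes only the leaf entries, not the recurrence). It costs $B+1$ lookups, whereas the paper's route buys a cleaner single-entry statement.

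A repair closer to the paper would replace $B$ by $\min(B,|R|)$, where $R$ is the maximal viable set. $R$ is the union of all viable sets and can be computed greedily in topological order. Padding does work up to size $|R|$: a source $v$ of the subgraph induced by $R\setminus P^*$ has all its prey in $R$ inside $P^*$. Hence either $v$ is a source of $\Food$ or $\gamma_v(P^*)\ge\gamma_v(R)\ge 1$, so $v$ can be added.
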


\begin{proof}
  If $P^*$ is a set of maximum diversity among all viable sets
  of size at most $B$, then~$P^*$ is $(\rho,\emptyset,|P^*|)$-compatible.  If
  $|P^*| < B$, we can add vertices all of whose prey are in~$P^*$ to~$P^*$
  until $|P^*| = B$.  This does not decrease the diversity of $P^*$ and keeps
  $P^*$ viable. Thus, $d(P^*) \le S[\rho,\emptyset,B]$.

  Conversely, there exists a $(\rho,\emptyset,B)$-compatible set $P$ with $d(P)
  = S[\rho,\emptyset,B]$. Since $A_{\rho} = \emptyset$, $P$ is viable. Since
  $|P| = B$, this shows that $d(P^*) \ge d(P) = S[\rho,\emptyset,B]$.
\end{proof}

It remains to describe how to fill in the table $S$. We do so bottom-up in $T$,
from the leaves to the root.

\subsection{Leaf}

Let~$v$ be a leaf of~$T$.  Then, $Z_v = \{v\}$.  Thus, the only possible entries
$S[v,A',\ell] \ne -\infty$ satisfy $\ell \in \{0,1\}$.  For $\ell = 0$, we have
$S[v,A',0] = 0$, for all $A' \subseteq A_v$, as the empty set is~$(v,A',0)$-compatible,
for all vertices~$v$ and all $A' \subseteq A_v$.  For
$\ell = 1$, we have $S[v,A',1] = d(v)$ if $v$ is a source in~\Food or if $\gamma_v(A') \ge 1$.  Otherwise, $S[v,A',1] = -\infty$.

\subsection{Internal Node}

Let~$v$ be an internal node of~$T$ with children $w_1, \ldots, w_t$ in~$T$.
For every pair $(A',\ell)$ with $A' \subseteq A_v$ and $\ell \in [B]_0$, a set
$P \subseteq Z_v$ is $(v,A',\ell)$-compatible if and only if it can be
partitioned into disjoint sets $P_0, P_1, \ldots, P_t$ such that
\begin{itemize}
  \item $P_0 \subseteq \{v\}$,
  \item $P_i \subseteq Z_{w_i}$, for all $i \in [t]$,
  \item If $P_0 = \{v\}$, then $v$ is a source of $\Food$ or $\gamma_v(A') \ge 1$, and
  \item $P_i$ is $(v,A_{w_i} \cap (A' \cup P_0),|P_i|)$-compatible, for all $i \in [t]$.
\end{itemize}
This holds because, by the properties of a tree extension of $\Food$ and the
definition of the sets~$A_u$, for all $u \in T$, we have $N^-(v) \subseteq A_v$
and, for every $i \in [t]$, $A_{w_i} \subseteq A_v \cup \{v\}$.

Let $\P_0(A')$ be the set of possible choices for $P_0$, that is, $\P_0(A') =
\{\emptyset, \{v\}\}$ if $v$ is a source in $\Food$ or~$\gamma(N^-(v) \cap A')
\ge 1$, and $\P_0(A') = \{\emptyset\}$ otherwise.
Then the characterization of~$(v,A',\ell)$-compatible sets just given implies that
\begin{multline}
  S[v,A',\ell] =
  \max
  \Bigl\{
    d(P_0) + \sum_{i \in [t]} S[w_i,A_{w_i} \cap (A' \cup P_0),\ell_i]
    \,\Bigr|\\
    P_0 \in \P_0(A');
      \ell_1, \ldots, \ell_t \in [\ell]_0;
      |P_0| + \sum_{i \in [t]} \ell_i = \ell
  \Bigr\},
  \label{eq:recurrence}
\end{multline}

To evaluate \cref{eq:recurrence} efficiently for all $\ell \in [B]_0$, we use
an auxiliary table $S_{v,A'}$ indexed by triples $(j,\ell',P_0)$ with $j \in
[t]$, $\ell' \in [B]_0$, and $P_0 \in \P_0(A')$, which satisfies
\begin{align}
  S_{v,A'}[j,\ell',P_0]
  = \max
  \Bigl\{
    \sum_{i \in [j]} S[w_i,A_{w_i} \cap (A' \cup P_0),\ell_i]
  \,\Bigl|\,
    \ell_1, \ldots, \ell_j \in [\ell']_0, \sum_{i \in [j]} \ell_i = \ell'
  \Bigr\}.
\end{align}
Then
\begin{align}
  S[v,A',\ell] &= \max_{\mathclap{P_0 \in \P_0(A')}} (d(P_0) + S_{v,A'}[t,\ell - |P_0|,P_0]),\label{eq:aux-to-S}
\intertext{and we can compute these auxiliary table entries using the recurrence}
  S_{v,A'}[1,\ell',P_0] &= S[w_1,A_{w_1} \cap (A' \cup P_0),\ell'],\label{eq:aux-init}\\
  S_{v,A'}[j,\ell',P_0] &= \max_{\mathclap{\ell_j \in [\ell']_0}} (S_{v,A'}[j-1,\ell' - \ell_j,P_0] + S[w_j,A_{w_j} \cap (A' \cup P_0),\ell_j,P_0]) \ \rlap{$\forall j > 1$.} \label{eq:aux-induction}
\end{align}

Given the auxiliary tables, each entry in $S$ can be computed in constant time
using \cref{eq:aux-to-S}. Since $|A_v| \le \nsw(\Food)$ and $\ell \le B \le
|V(\Food)|$, there are at most $n^2 \cdot 2^{\nsw(\Food)}$ such entries to
compute, where $n = |V(\Food)|$.  Thus, filling in $S$ takes $O\left(n^2 \cdot
2^{\nsw(\Food)}\right)$ time.

In total, there are $O\left(n^2 \cdot 2^{\nsw(\Food)}\right)$ auxiliary table
entries: There are at most $2(B + 1) \le 2(n + 1)$ auxiliary table entries for
each pair $(v,j)$ and each subset $A' \subseteq A_v$.  Each pair~$(v,j)$
corresponds to a child of $v$, so there are $n - 1$ such pairs for all nodes in
$T$.  There are $2^{|A_v|} \le 2^{\nsw(\Food)}$ choices for~$A'$, for each
pair~$(v,j)$. Since each auxiliary table entry can be computed in $O(n)$ time
using \cref{eq:aux-init,eq:aux-induction}, filling in all auxiliary tables
takes~$O\left(n^3 \cdot 2^{\nsw(\Food)}\right)$ time. This completes the proof
of \cref{thm:nsw-fpt}.

\section{Discussion}

This paper establishes the first explicit separation between treewidth and
scanwidth from the perspective of fixed-parameter tractability. While scanwidth
has been recognized as a good parameterization for obtaining efficient
algorithms in phylogenetics and related biological fields, its algorithmic
power relative to treewidth had remained an open question. Our results prove
that scanwidth is not merely a larger variant of treewidth, but enables
efficient algorithms even in settings where treewidth-based approaches fail.

Several open questions remain. First, it would be interesting to identify
further problems providing an analogous separation between these two width
measures. Second, while our algorithm assumes that a tree extension of bounded
node scanwidth is given, the computational complexity of finding such a tree
extension efficiently remains open for node scanwidth, and computing a tree
extension of optimal (edge) scanwidth is currently only known to be in \XP,
when parameterized by the scanwidth. Exploring relationships between scanwidth
and other directed width measures may yield additional insights into the
parameterized complexity of problems on directed acyclic graphs.

%

\newpage
\pagestyle{empty}
\bibliography{ref}

@book{cygan,
  title = {Parameterized {A}lgorithms},
  author = {Cygan,Marek and Fomin, Fedor V. and Kowalik, Lukasz and Lokshtanov,
            Daniel and Marx, D{\'{a}}niel and Pilipczuk, Marcin and Pilipczuk,
            Michal and Saurabh, Saket},
  publisher = {Springer},
  year = {2015},
  doi = {10.1007/978-3-319-21275-3},
}

@book{downeybook,
  title = {Fundamentals of {P}arameterized {C}omplexity},
  author = {Downey, Rodney G. and Fellows, Michael R.},
  series = {Texts in Computer Science},
  publisher = {Springer},
  year = {2013},
  doi = {10.1007/978-1-4471-5559-1},
}

@book{diestel2025graph,
  title = {{Graph Theory}},
  author = {Diestel, Reinhard},
  volume = {173},
  year = {2025},
  publisher = {Springer Nature},
  doi = {10.1007/978-3-662-70107-2},
}

@article{robertson1986graph,
  title = {{Graph Minors. II. Algorithmic Aspects of Tree-Width}},
  author = {Robertson, Neil and Seymour, Paul D.},
  journal = {Journal of Algorithms},
  volume = {7},
  number = {3},
  pages = {309--322},
  year = {1986},
  publisher = {Elsevier},
  doi = {10.1016/0196-6774(86)90023-4},
}

@inproceedings{bodlaender2007treewidth,
  title = {{Treewidth: Structure and Algorithms}},
  author = {Bodlaender, Hans L.},
  booktitle = {Proceedings of the 14th International Conference on Structural
               Information and Communication Complexity (SIROCCO 2007)},
  pages = {11--25},
  year = {2007},
  doi = {10.1007/978-3-540-72951-8_3},
}

@inproceedings{bodlaender1988dynamic,
  title = {{Dynamic Programming on Graphs with Bounded Treewidth}},
  author = {Bodlaender, Hans L.},
  booktitle = {Proceedings of the 15th International Colloquium on Automata,
               Languages and Programming (ICALP 1988)},
  pages = {105--118},
  year = {1988},
  doi = {10.1007/3-540-19488-6_110},
}

@article{fellows2011complexity,
  title = {{On the complexity of some colorful problems parameterized by
           treewidth}},
  author = {Fellows, Michael R. and Fomin, Fedor V. and Lokshtanov, Daniel and
            Rosamond, Frances and Saurabh, Saket and Szeider, Stefan and
            Thomassen, Carsten},
  journal = {Information and Computation},
  volume = {209},
  number = {2},
  pages = {143--153},
  year = {2011},
  publisher = {Elsevier},
  doi = {10.1016/j.ic.2010.11.026},
}

@article{FAITH1992,
  title = {{Conservation evaluation and phylogenetic diversity}},
  author = {Faith, Daniel P.},
  journal = {Biological Conservation},
  volume = {61},
  number = {1},
  pages = {1-10},
  year = {1992},
  doi = {10.1016/0006-3207(92)91201-3}
}

@article{steel,
  title = {{Phylogenetic Diversity and the Greedy Algorithm}},
  author = {Steel, Mike},
  journal = {Systematic Biology},
  volume = {54},
  number = {4},
  pages = {527--529},
  year = {2005},
  publisher = {Society of Systematic Zoology},
  doi = {10.1080/10635150590947023},
}

@article{Pardi2005,
  title = {{Species Choice for Comparative Genomics: Being Greedy Works}},
  author = {Pardi, Fabio and Goldman, Nick},
  journal = {PLoS Genetics},
  year = {2005},
  volume = {1},
  doi = {10.1371/journal.pgen.0010071},
}

@article{faller,
  title = {{Optimizing Phylogenetic Diversity with Ecological Constraints}},
  author = {Faller, Be{\'a}ta and Semple, Charles and Welsh, Dominic},
  journal = {Annals of Combinatorics},
  volume = {15},
  number = {2},
  pages = {255--266},
  year = {2011},
  publisher = {Springer},
  doi = {10.1007/s00026-011-0093-6},
}

@article{moulton,
  title = {{Optimizing phylogenetic diversity under constraints}},
  author = {Moulton, Vincent and Semple, Charles and Steel, Mike},
  journal = {Journal of Theoretical Biology},
  volume = {246},
  number = {1},
  pages = {186--194},
  year = {2007},
  publisher = {Elsevier},
  doi = {10.1016/j.jtbi.2006.12.021},
}

@inproceedings{berry,
  title = {Scanning phylogenetic networks is {NP}-hard},
  author = {Berry, Vincent and Scornavacca, Celine and Weller, Mathias},
  booktitle = {Proceedings of the 46th International Conference on Current
               Trends in Theory and Practice of Computer Science (SOFSEM 2020)},
  pages = {519--530},
  year = {2020},
  organization = {Springer},
  doi = {10.1007/978-3-030-38919-2_42},
}

@mastersthesis{bruchholdThesis,
  author = {Bruchhold, Sebastian},
  title = {{On (Node-)Scanwidth and Its Algorithmic Applications}},
  school = {Technische Universität Berlin},
  year = {2024},
  type = {Master's Thesis},
  doi = {10.14279/depositonce-21448},
}

@mastersthesis{holtgrefeThesis,
  author = {Holtgrefe, Niels},
  title = {{Computing the Scanwidth of Directed Acyclic Graphs}},
  school = {Technische Universiteit Delft},
  year = {2023},
  type = {Master's Thesis},
  url = {https://resolver.tudelft.nl/uuid:9c82fd2a-5841-4aac-8e40-d4d22542cdf5},
}

@article{holtgrefe2024exact,
  title = {{Exact and Heuristic Computation of the Scanwidth of Directed Acyclic
           Graphs}},
  author = {Holtgrefe, Niels and van Iersel, Leo and Jones, Mark},
  journal = {arXiv preprint arXiv:2403.12734},
  year = {2024},
  eprint = {2403.12734},
  archivePrefix = {arXiv},
}

@article{scornavacca2022treewidth,
  title = {{Treewidth-based algorithms for the small parsimony problem on
           networks}},
  author = {Scornavacca, Celine and Weller, Mathias},
  journal = {Algorithms for Molecular Biology},
  volume = {17},
  number = {1},
  pages = {15},
  year = {2022},
  publisher = {Springer},
  doi = {10.1186/s13015-022-00216-w},
}

@inproceedings{bruchhold2025exploiting,
  title = {{Exploiting Low Scanwidth to Resolve Soft Polytomies}},
  author = {Bruchhold, Sebastian and Weller, Mathias},
  booktitle = {Proceedings of the 51st International Conference on Current
               Trends in Theory and Practice of Computer Science (SOFSEM 2026)},
  DELETEpages = {ToDo},
  year = {2026},
  organization = {Springer},
  eprint = {2511.20771},
  archivePrefix = {arXiv},
}

@inproceedings{MAPPD,
  title = {{How Can We Maximize Phylogenetic Diversity? Parameterized Approaches
           for Networks}},
  author = {Jones, Mark and Schestag, Jannik},
  booktitle = {Proceedings of the 18th International Symposium on Parameterized
               and Exact Computation (IPEC 2023)},
  pages = {30:1--30:12},
  year = {2023},
  organization = {Schloss-Dagstuhl-Leibniz Zentrum f{\"u}r Informatik},
  doi = {10.4230/LIPIcs.IPEC.2023.30},
}

@inproceedings{MAPPDD,
  title = {{Parameterized Algorithms for Diversity of Networks with Ecological
           Dependencies}},
  author = {Jones, Mark and Schestag, Jannik},
  booktitle = {Proceedings of the 20th International Symposium on Parameterized
               and Exact Computation (IPEC 2025)},
  pages = {11:1--11:21},
  year = {2025},
  organization = {Schloss-Dagstuhl-Leibniz Zentrum f{\"u}r Informatik},
  doi = {10.4230/LIPIcs.IPEC.2025.11},
}

@article{PaNDA,
  title = {{PaNDA: Efficient Optimization of Phylogenetic Diversity in Networks}},
  author = {Holtgrefe, Niels and van Iersel, Leo and Meuwese, Ruben and Murakami
            , Yukihiro and Schestag, Jannik},
  year = {2025},
  publisher = {Cold Spring Harbor Laboratory},
  journal = {bioRxiv: 10.1101/2025.11.14.688467},
  doi = {10.1101/2025.11.14.688467},
}

@inproceedings{PDD,
  title = {{Maximizing Phylogenetic Diversity under Ecological Constraints: A
           Parameterized Complexity Study}},
  author = {Komusiewicz, Christian and Schestag, Jannik},
  booktitle = {Proceedings of the 44th IARCS Annual Conference on Foundations of
               Software Technology and Theoretical Computer Science (FSTTCS~2024)
               },
  year = {2024},
  pages = {28:1--28:18},
  organization = {Schloss-Dagstuhl-Leibniz Zentrum f{\"u}r Informatik},
  doi = {10.4230/LIPIcs.FSTTCS.2024.28},
}

@inproceedings{WeightedFW,
  title = {{Weighted Food Webs Make Computing Phylogenetic Diversity So Much
           Harder}},
  author = {Jannik Schestag},
  booktitle = {Proceedings of the 51st International Conference on Current
               Trends in Theory and Practice of Computer Science (SOFSEM 2026)},
  DELETEpages = {ToDo},
  year = {2026},
  organization = {Springer},
  eprint = {2510.05911},
  archivePrefix = {arXiv},
}

@inproceedings{1PDDkernel,
  title = {{Limits of Kernelization and Parametrization for Phylogenetic
           Diversity with Dependencies}},
  author = {Niels Holtgrefe and Jannik Schestag and Norbert Zeh},
  booktitle = {Proceedings of the 17th Latin American Theoretical Informatics
               Symposium (LATIN 2026)},
  DELETEpages = {ToDo},
  year = {2026},
  organization = {Springer},
  TODODOI = {},
}

@article{bodlaender2025xnlp,
  title = {{XNLP-Completeness for Parameterized Problems on Graphs with a Linear
           Structure}},
  author = {Bodlaender, Hans L. and Groenland, Carla and Jacob, Hugo and Jaffke,
            Lars and Lima, Paloma T.},
  journal = {Algorithmica},
  volume = {87},
  number = {4},
  pages = {465--506},
  year = {2025},
  publisher = {Springer},
  doi = {10.1007/s00453-024-01274-9},
}

@article{vanIerselTreeContainment,
  title = {Embedding phylogenetic trees in networks of low treewidth},
  author = {van Iersel, Leo and Jones, Mark and Weller, Mathias},
  sortname = {Iersel, Leo van and Jones, Mark and Weller, Mathias},
  journal = {Discrete Mathematics \& Theoretical Computer Science},
  year = {2023},
  volume = 45,
  pages = 2,
  doi = {10.46298/dmtcs.10116},
}

@unpublished{JonesTreeContainment,
  title = {Tree Contaiment Parameterized By ScanWidth},
  author = {Jones, Mark and Weller, Mathias},
  year = {2026},
  note = {Manuscript in preparation},
}
\bibliographystyle{plainurl}   

\end{document}